\newcommand{\cl}{\mathcal}
\newcommand{\rar}{\rightarrow}
\newcommand{\bs}[1]{\boldsymbol{#1}}
\definecolor{deepgreen}{cmyk}{1,0,1,0.5}
\newcommand{\al}{\alpha}
\newcommand{\p}{\partial}
\numberwithin{equation}{section}
\newtheorem{thm}{Theorem}[section]
\newtheorem{prop}[thm]{Proposition}
\theoremstyle{remark}
\definecolor{green}{rgb}{0,0.8,0} 
\newcommand{\tr}{\textrm{tr}}
\newcommand{\bbE}{\mathbb E}
\newcommand{\bbR}{\mathbb R}
\newcommand{\bbS}{\mathbb S}
\newcommand{\mfm}{\bs{\mathsf{m}}}
\newcommand{\mfn}{\bs{\mathsf{n}}}
\newcommand{\mfv}{\bs{\mathsf{v}}}
\newcommand{\mfu}{\bs{\mathsf{u}}}
\newcommand{\msm}{\bs{\mathsf{m}}}
\newcommand{\msn}{\bs{\mathsf{n}}}
\newcommand{\msv}{\bs{\mathsf{v}}}
\newcommand{\msa}{\bs{\mathsf{a}}}
\newcommand{\msb}{\bs{\mathsf{b}}}
\newcommand{\msu}{\bs{\mathsf{u}}}
\newcommand{\msE}{\bs{\mathsf{E}}}
\newcommand{\msR}{\bs{\mathsf{R}}}
\begin{document}

\title[Strain-limited special Cosserat rod]{On an elastic strain-limiting special Cosserat rod model}

\author{K. R. Rajagopal and C. Rodriguez}

\begin{abstract}
	Motivated by recent strain-limiting models for solids and biological fibers, we introduce the first intrinsic set of nonlinear constitutive relations, between the geometrically exact strains and the components of the contact force and contact couple, describing a uniform, hyperelastic, strain-limiting special Cosserat rod. After discussing some attractive features of the constitutive relations (orientation preservation, transverse symmetry, and monotonicity), we exhibit several explicit equilibrium states under either an isolated end thrust or an isolated end couple. In particular, certain equilibrium states exhibit \emph{Poynting} like effects, and we show that under mild assumptions on the material parameters, the model predicts an explicit tensile \emph{shearing bifurcation}: a straight rod under a large enough tensile end thrust parallel to its center line can shear.  
\end{abstract}

\maketitle

\section{Introduction}

\subsection{Elastic rods}
{
A rod is a slender body that can be modeled by a deformable one-dimensional continuum, appropriately parameterized, wherein we can account for the elongation, shearing, bending, and twisting of the slender body. If the slender body in question is elastic, then we have a model for an elastic rod. A succinct history of the development of a theory for elastic rods from that for a three-dimensional elastic body can be found in pages 663--664 of the treatise \cite{Antman1973}. 

The special Cosserat theory provides an alternative development of a model to describe rods. In the static setting of this alternative approach, the configuration of the deformed slender body is modeled by a one-dimensional curve in space, 
\begin{align}
	[0,L] \ni s \mapsto \bs r(s) \in \bbE^3, 
\end{align}
(the \emph{center line}) to which a right-handed collection of orthonormal vector fields $ \{ \bs d_k(\cdot) \}_{k = 1}^3$ (the  \emph{directors}) is attached.\footnote{Such an approach was introduced by Duhem in \cite{Duhem1893} and E. and F. Cosserat in \cite{Cosserats1907, Cosserats1909}.}  The directors $\{\bs d_1(s), \bs d_2(s)\}$ are viewed as tangent to the material cross section transversal to the center line at $\bs r(s)$; see Figure \ref{f:fig1} in Section 2.1.  The directors  
are able to deform independently of the center line, and the Darboux vector field $\bs u(\cdot)$ along the center line characterizes their deformation: 
\begin{align}
	\frac{d}{ds} \bs d_k(s) = \bs u(s) \times \bs d_k(s), \quad s \in [0,L], k = 1, 2, 3.
\end{align}  
The six components of $\bs u(s)$ and $\bs v(s) = \frac{d}{ds} \bs r(s)$ in the frame $\{\bs d_k(s)\}_{k = 1}^3$ are the measures of geometric strain in the theory. Two other vector fields along the rod, the contact force $\bs n(\cdot)$ and contact couple $\bs m(\cdot)$, are postulated to model how material segments exert forces and couples on other material segments. The differential equations expressing local balance of linear and angular momentum and the specification of constitutive relations between the geometric strains and the components of the rod's contact force and contact couple yield a closed system of equations capable of prediction. See Section 2 for a review of the elements of the special Cosserat theory needed for this work.  
An up-to-date discussion of the results pertinent to general elastic rod theories can be found in the exhaustive book \cite{AntmanBook} on nonlinear elasticity. 

\subsection{Motivation for the model}
In the case of the classical theory for rods developed within the context of 3-dimensional bodies, one usually assumes that the body in question belongs to the class of simple materials (see \cite{Noll58}), which within the confines of a purely mechanical theory leads to the stress being completely determined by the history of deformation. Elastic bodies are special cases of such simple materials. In \cite{Raj_Implicit03} the first author generalized the notion of simple materials to include implicit relations between the history of the stress and the history of the deformation gradient (see also the work \cite{Prusa2012}). A special subclass of such materials are elastic bodies that are related through an implicit relation between the Cauchy stress $\bs T$ and the deformations gradient $\bs F$, through
\begin{align}
\bs f(\bs T, \bs F) = \bs 0. \label{eq:1}
\end{align}
We notice that the class of classical Cauchy elastic bodies, satisfying $\bs T = \bs g(\bs F)$, are a special sub-class of the constitutive relations \eqref{eq:1}.                                                                                     
If the body is anisotropic, then \eqref{eq:1} reduces to (see \cite{RAJAGOPAL2015})
\begin{align}
\bs f(\bs R^* \bs T \bs R, \bs C) = \bs 0,        \label{eq:2}                      
\end{align}                                 
where $\bs F = \bs R \bs U$ is the polar decomposition of the deformation gradient and $\bs C = \bs F^* \bs F$ is the right Cauchy-Green tensor.
If the three-dimensional body has one preferential direction $\bs a$ with respect to its response, then the implicit equation will take the form
\begin{align} 
\bs f(\bs T, \bs F, \bs a) = \bs 0.
	\label{eq:4}
\end{align}
A special sub-class of \eqref{eq:2} are constitutive relations of the form
\begin{align}
\bs C = \bs h(\bs R^* \bs T \bs R),      \label{eq:5}
\end{align}
and if the body is isotropic, the constitutive relation takes the form
\begin{align}
	\bs B = \bs h(\bs T), \label{eq:6}
\end{align}
where $\bs B = \bs F \bs F^*$ is the left Cauchy-Green tensor. 
Our reason for documenting the constitutive relations given by \eqref{eq:5} and \eqref{eq:6} is to motivate the constitutive relations that we propose for our special Cosserat rod in this work.
However, we point out that the constitutive relations between the special Cosserat rod's geometric strains and the rod's contact force and couple that we propose \emph{are not derived} from a fully 3-dimensional relation like \eqref{eq:5} or \eqref{eq:6}. 
Instead, we propose \emph{intrinsic} constitutive relations between the geometric strains and the components of the rod's contact force and couple (see equations \eqref{eq:uconst}, \eqref{eq:u3const}, \eqref{eq:vconst} and \eqref{eq:v3const}). 

While it is customary and mathematically desirable to provide expressions for the components of the rod's contact force and contact couple in terms of the strains, we provide expressions for the strains in terms of the components of the rod's contact force and contact couple.
There is a sound reason for our choice for specifying the constitutive relation as we do. A central notion in Newtonian mechanics, and this includes continuum mechanics, is the notion of causality. From the standpoint of causality, applied forces and applied moments are the causes and the strains are the effect, and thus it would be appropriate to express the effects in terms of the causes than vice-versa (see \cite{RajConspectus}) for a discussion of the relevant issues).

The constitutive relations that we propose in this paper are motivated by constitutive relations introduced in \cite{Raj2010, RajSmallStrain, RajConspectus} for the response of three-dimensional elastic bodies wherein the left Cauchy-Green tensor depends nonlinearly on the stress. While the constitutive relations considered in \cite{Raj2010} is for a 3-dimensional elastic isotropic solid, we appropriately modify it to describe an elastic special Cosserat rod that is \emph{transversely hemitropic} (as defined in \cite{Healey02}). 
 
The constitutive relations introduced in this work have a very special feature, namely that they are strain-limiting, that is as the applied forces and couples increase, the strains increase monotonically and reach a finite limit asymptotically (see Figure \ref{f:fig2}). This feature of the constitutive relations, we feel, is a critical feature that would make it an appropriate model to describe many rod-like materials whose final tangent stiffness greatly exceeds the initial tangent stiffness, so much so that such materials are best approximated as materials with limiting stretch (see Section 5 for a discussion).  
More generally, over the past decade, works on strain-limiting constitutive relations for elastic solid bodies have investigated:
\begin{itemize}
	\item specific problems of simple shear, torsion, extension etc. \cite{Bustamante2009, BustRaj2010, BustRaj2011, RajSmallStrain}, 
\item time independent and dependent boundary value problems \cite{BulicekMalekRajSuli2014, BulMalSuli2015, BulMalekRajWal_Existence15, GelmettiSuli2019, BULPATSULISENGUL21, BULPATSULISENGUL22},
\item the modeling of gum metal \cite{RajSmallStrain, DevSanKannRaj2017, KulvaitMalRaj2017}, 
\item the theory of fracture \cite{RajWalt2011, KulMalRaj2013, GuoMallRajWalt2015, ZappBertoRaj2016, ItouKovRaj2017, ItouKovRaj2017b, KulMalRaj2019},
\item materials with density dependent moduli and damage \cite{RAJDENSITY21, D32021, D32021II, MURRURAJ21UNI, MURRAJRAJ21BI, ITOUKOVRAJ21, PRUSARAJWINE22, VAJMURRAJ22}, 
\item viscoelasticity \cite{ERBAYSENG15, ERBAYSENG20, ERBAYetal2020 ,SENGUL2021},
\item wave propagation \cite{KannanRajSacc2014, RajSacc2014, BustSfy2015, MagMasHar2018, MenOreBust2018},
\item elastic strings \cite{Rodriguez21, Rodriguez22}.
\end{itemize}
However, this paper is the \emph{first work} studying a set of strain-limiting constitutive relations for special Cosserat rods.

We comment that it is not only causality that influences our choice of the constitutive relations for the strains in terms of the contact force and couple. In a real material, the limit for a measure of strain is (essentially) reached for a finite applied force, and the body seems to behave elastically (to first approximation) for a slightly larger force. But, upon applying a much larger force, the material behaves inelastically and cannot be modeled within a purely elastic framework.  In the regime of elastic response, such a material could be described by a constitutive relation wherein the strain is a function of the stress and not the stress as a function of the strain. That is, the strain versus stress relation is not invertible. See Figure \ref{f:fig2}.  Such constitutive relations are not Cauchy elastic and hence not Green elastic. Thus, such bodies must be described either by relations such as \eqref{eq:5} or even the more general class \eqref{eq:2}, with these relations not being invertible. Recently, the second author studied such stretch-limiting constitutive relations for elastic strings in both static \cite{Rodriguez21} and dynamic \cite{Rodriguez22} settings, and the authors studied certain inelastic behavior for the quasistatic motion of an inextensible, unshearable, viscoelastic rod \cite{RajRod22}.  

\begin{figure}[b]
	\centering
	\includegraphics[width=\textwidth]{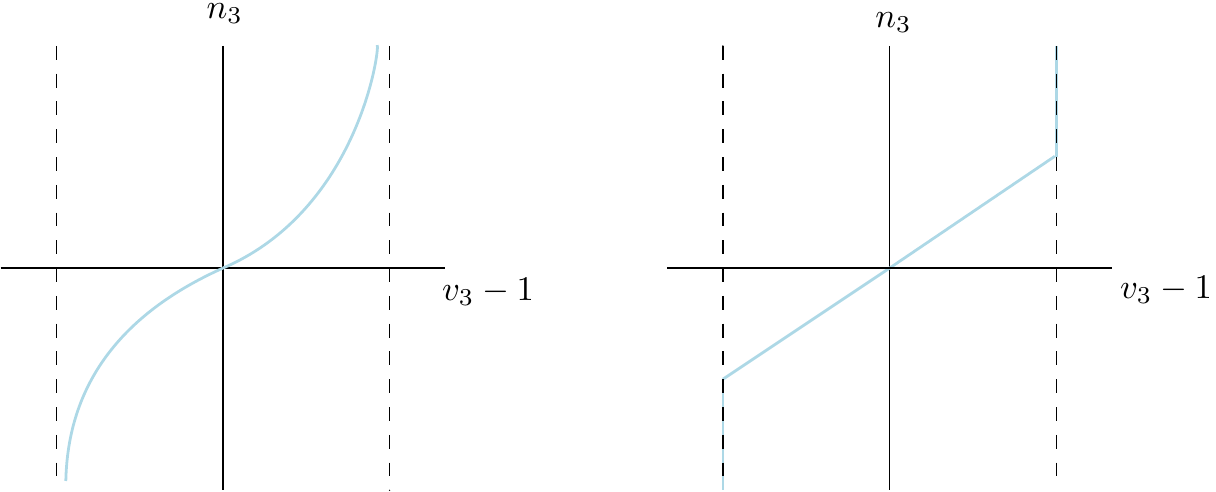}
	\caption{Two types of strain-limiting constitutive relationships between the dilatation strain $v_3$ and tension $n_3$. The first relation is qualitatively like those we consider in this work, and it can be expressed as $n_3 = \hat n_3(v_3 - 1)$ while the latter cannot.}
	\label{f:fig2}
\end{figure}

\subsection{Main results and outline}

An outline of the main results and the organization of our study are as follows. In Section 2 we briefly review the elements of the static special Cosserat theory used in this paper. In Section 3, we introduce our intrinsic set of strain-limiting constitutive relations between the geometrically exact strains and the components of the contact force and contact couple describing a uniform, special Cosserat rod capable of bending, twisting, shearing and stretching. See equations \eqref{eq:uconst}, \eqref{eq:u3const}, \eqref{eq:vconst} and \eqref{eq:v3const}.  We show that these relations arise from a complementary energy depending on the rod's contact force and contact couple. 
These relations can easily be specialized to enforce unshearability constraints, 
\begin{align}
	\bs r'(s) = v_3(s) \bs d_3(s), \quad s \in [0,L],
\end{align}
or unshearability and inextensibility constraints, 
\begin{align}
	\bs r'(s) = \bs d_3(s), \quad s \in [0,L],
\end{align}
but we do not pursue this here. We then show that these relations can be inverted, expressing the components of the rod's contact couple and contact force in terms of the strains, and that these relations can be derived from a stored energy. All but one of the parameters defining these relations readily admit physical interpretations in terms of small-strain material moduli (but we emphasize that our relations are in terms of large exact strains). See Section 3.1. In the remainder of Section 3, we establish the following attractive properties of the constitutive relations, valid for all values of contact force and contact couple: 
\begin{itemize}
\item orientation preservation\footnote{By \emph{orientation} we mean either a choice of direction for the director $\bs d_3$, relative to the center line, or a stronger condition in terms of the local invertibility of an associated constrained 3-dimensional deformation. See Section 3.2 for more.} (Section 3.2), 
\item transverse hemitropy (Section 3.3), 
\item monotonicity of the constitutive relations (Section 3.4). 
\end{itemize}  
To the authors' knowledge we are unaware of any other explicit shearable, extensible, special Cosserat rod model satisfying all three of these properties for all values of contact force and contact couple. In Section 4, we exhibit several explicit equilibrium states, under a prescribed end thrust, subject to either isolated contact forces (Section 4.2) or isolated contact couples (Section 4.3). In the former case, the center line of the rod is straight, the end thrust is parallel to the center line, and we find that when a tensile end thrust for the rod exceeds a critical value, we have a bifurcation that can be associated with spontaneous shearing of the rod (see Figure \ref{f:fig3}). In the case of isolated contact couples, our constitutive relations also predict an interesting feature, namely if the chirality of the rod is opposite the chirality of the applied couple, then the applied couple elongates the rod: a Poynting type effect.\footnote{The Poynting effect was first observed by Wertheim in the 1850's; see \cite{Wertheim1857} and the extended discussion in \cite{MechanicsSolidsI}.} Finally, we exhibit an explicit 2-parameter family of equilibrium states with helical center lines subject to isolated contact couples. In general, our discussion in Section 4 is from the semi-inverse standpoint: certain aspects of the equilibrium state will be fixed at the start in order to solve for the kinematic variables and obtain simple, explicit solutions. Finally, in Section 5 we discuss potential applications for the model and future work.

}
\section{Preliminaries}

In this section we briefly review the theory of special Cosserat rods (see Ch. 8 of \cite{AntmanBook} for a more comprehensive introduction).

\subsection{Kinematics and strains}
Let $\bbE^3$ be $3$-dimensional Euclidean space, and let $\{\bs g_k \}$ be a fixed right-handed orthonormal basis for the vector space $\bbR^3$. Let $[0,L]$ be the reference interval parameterizing the material points of a uniform rod, with reference length $L$, that is straight in the reference configuration. The (deformed) \emph{configuration} of a special Cosserat rod is defined by a triple: 
\begin{align}
[0,L] \ni s \mapsto (\bs r(s), \bs d_1(s), \bs d_2(s)) \in \bbE^3 \times \bbR^3 \times \bbR^3,
\end{align}
with $\bs d_1(s)$ and $\bs d_2(s)$ orthonormal for each $s$. The curve $\bs r(\cdot)$ is the \emph{center line} of the rod, and $\{\bs d_1(s), \bs d_2(s)\}$ are the \emph{directors} at $s$, vectors regarded as tangent to the material cross section transversal to the center line at $\bs r(s)$. Let
\begin{align*}
	\bs d_3(s) = \bs d_1(s) \times \bs d_2(s).
\end{align*}
Then $\{ \bs d_k(s) \}$ is a right-handed orthonormal basis for $\bbR^3$ for each $s$, and it describes the configuration of the deformed material cross section at $\bs r(s)$ (see Figure \ref{f:fig1}). 

\begin{figure}[b]
	\centering
\includegraphics[width=\textwidth]{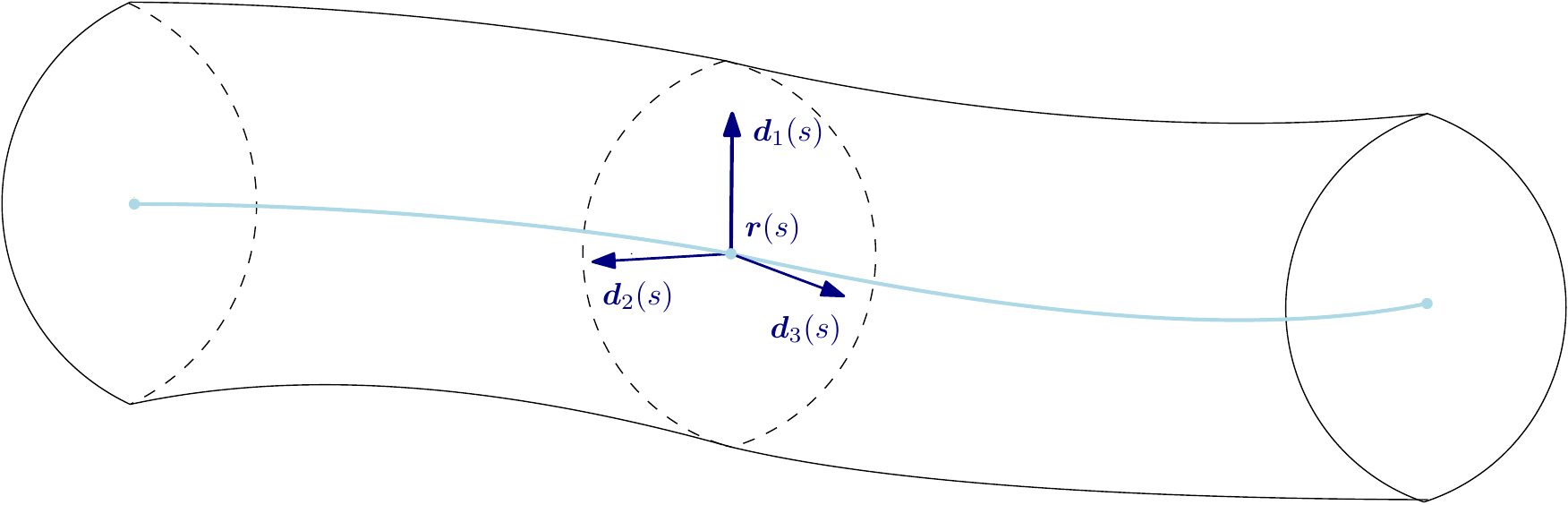}
\caption{The kinematic variables defining the configuration of a special Cosserat rod.}
\label{f:fig1}
\end{figure}

Since, for each $s$, $\{ \bs d_k(s) \}$ is a positively oriented orthonormal basis for $\bbR^3$, there exists a unique vector field (the \emph{Darboux vector field})
\begin{align*}
	\bs u(s) = u_k(s) \bs d_k(s) \in \bbR^3, \quad s \in [0,L],
\end{align*} 
such that 
\begin{align}
  \bs d_k'(s) = \bs u(s) \times \bs d_k(s),
\end{align}
where $'\mbox{ } = \frac{d}{ds}$.
In fact, one can solve for $u(s)$ explicitly, 
\begin{align}
	\bs u(s) = \frac{1}{2} \bs d_k(s) \times \bs d_k'(s). \label{eq:uequation}
\end{align}
The components $u_1$ and $u_2$ are referred to as the \emph{flexural strains}, and the component $u_3$ is referred to as the \emph{torsional strain} (or \emph{twist}).  
We may also express the tangent vector to the center line at $s$ in the basis $\{ \bs d_k(s) \}$ via 
\begin{align}
	\bs r'(s) = v_k(s) \bs d_k(s).
\end{align}
The components $v_1$ and $v_2$ are referred to as the \emph{shear strains}. The component $v_3$ is referred to as the \emph{dilatation strain}, and an orientation of the director $\bs d_3(s)$ relative to the center line is fixed by requiring that for all $s \in [0,L]$, 
\begin{align}
 v_3(s) > 0. \label{eq:v3pos}
\end{align} 
The restriction \eqref{eq:v3pos} also implies that the \emph{stretch} of the rod is never zero, $|\bs r'| > 0$, and that the rod cannot be sheared so severely that a section becomes tangent to the center line.

\subsection{Balance laws}

Let $[a,b] \subseteq [0,L]$. We denote the contact force by $\bs n(s)$ so that the resultant force on the material segment $[a,b]$ be $[0,a) \cup (b,L]$ is given by 
\begin{align}
	\bs n(b) - \bs n(a).
\end{align}
The contact couple is denoted by $\bs m(s)$ so that the resultant contact couple about $\bs o \in \bbE^3$ on the material segment $[a,b]$ by $[0,a) \cup (b,L]$ is given by 
\begin{align}
	\bs m(b) + (\bs r(b) - \bs o) \times \bs n(b) - \bs m(a) - (\bs r(a) - \bs o) \times \bs n(a). 
\end{align}
If $\bs f(s)$ is an external body force per unit reference length and $\bs l(s)$  is an external body couple per unit reference length, then the classical equilibrium equations expressing balance of linear momentum and angular momentum are given by: 
\begin{align}
	\bs n '(s) + \bs f(s) &= \bs 0, \label{eq:neq} \\
	\bs m '(s) + \bs r '(s) \times \bs n(s) + \bs l(s) &= \bs 0. \label{eq:meq}
\end{align}
At each $s$, the contact force and contact couple may be expressed in the basis $\{\bs d_k(s)\}$ via 
\begin{align}
	\bs n(s) = n_k(s)\bs d_k(s), \quad \bs m(s) = m_k(s) \bs d_k(s). 
\end{align}
The components $n_1$ and $n_2$ are referred to as the \emph{shear forces}, and the component $n_3$ is referred to as the \emph{tension} (or \emph{axial force}). The components $m_1$ and $m_2$ are referred to as the \emph{bending couples} (or \emph{bending moments}), and the component $m_3$ is referred to as the \emph{twisting couple} (or \emph{twisting moment}). 

\section{Strain-limited rods}

In this section we introduce our special Cosserat rod model and discuss some of its fundamental properties. {
We are interested in developing a fully nonlinear theory for the response of the rod. Thus, the strains that we consider below are the geometrically exact strains and are not constrained to be small. Also, as mentioned in the introduction, in keeping with causality we prescribe the geometric strains in terms of the rod's contact force and contact couple, and we show that these relations can be derived from a complementary energy that depends on the contact force and contact couple.  This is analogous to the development of constitutive relations wherein the strain is expressed in terms of a Gibbs potential in the fully three-dimensional theory of continuum mechanics.  }

\subsection{Constitutive relations}

To specify the mechanical properties of the rod and close the equations \eqref{eq:neq} and \eqref{eq:meq}, we must posit relations between the components of the strains
\begin{align}
	\mfu = \begin{bmatrix}
		u_1 \\
		u_2 \\
		u_3
	\end{bmatrix}, \quad 
\mfv = \begin{bmatrix}
	v_1 \\
	v_2 \\
	v_3
\end{bmatrix},
\end{align}
and the components of the contact couple and force
\begin{align}
	\mfm = \begin{bmatrix}
		m_1 \\
		m_2 \\
		m_3
	\end{bmatrix}, \quad 
	\mfn = \begin{bmatrix}
		n_1 \\
		n_2 \\
		n_3
	\end{bmatrix}.
\end{align}

The relations we propose are inspired by a class of elastic strain-limiting models introduced by Rajagopal within the context of $3$-dimensional solid mechanics \cite{Raj2010}: 
\begin{align}
	\bs e = \Bigl (
	a^{-p} + | b \bs T|^{p}
	\Bigr )^{-1/p} \bs T, \label{eq:TB}
\end{align}
where $a, b, p > 0$, $\bs B$ is the left Cauchy-Green tensor, $\bs e = \frac{1}{2}(\bs I - \bs B^{-1})$ is the Almansi-Hamel strain tensor, $\bs T$ is the Cauchy stress tensor and $|\bs T|^2 = \tr(\bs T^2)$. We emphasize here that the relations that we propose are between the contact force, contact couple and \emph{geometrically exact, large strains} rather than linearized strains. To describe our special Cosserat rod model, let $p, \al, \beta, \gamma, \zeta, \eta \in (0,\infty)$ and $\iota \in \bbR$ with 
\begin{align}
	\al^2 \beta^2 - \iota^2 > 0. 
\end{align}
Then the quadratic forms
\begin{align}
	Q(\mfu, \mfv) &= \al^2 (u_1^2 + u_2^2) + \beta^2 u_3^2 + +\zeta^2 (v_1^2 + v_2^2)\\ &\quad+ \eta^2 (v_3 - 1)^2 + 2 \iota u_3(v_3 - 1), \label{eq:quadraticform} \\
	Q^*(\mfm, \mfn) &= \frac{1}{\al^2} (m_1^2 + m_2^2) + \frac{1}{\zeta^2} (v_1^2 + v_2^2) +  
	\frac{\eta^2}{\beta^2 \eta^2 - \iota^2} m_3^2 \\ &\quad + 
	\frac{\beta^2}{\beta^2 \eta^2 - \iota^2} n_3^2
	- \frac{2 \iota}{\beta^2 \eta^2 - \iota^2} m_3 n_3.  
\end{align}
are positive definite. We posit that the relations between the strains and the components of the contact couple and force are given by
\begin{align}
	u_\mu & = \Bigl (
	\gamma^p + Q^*(\mfm, \mfn)^{p/2}
	\Bigr )^{-1/p} \frac{1}{\al^2} m_\mu, \label{eq:uconst} \\
	u_3 & =  \Bigl (
	\gamma^p + Q^*(\mfm, \mfn)^{p/2}
	\Bigr )^{-1/p} \frac{1}{\beta^2 \eta^2 - \iota^2} (\eta^2 m_3 - \iota n_3 ), \label{eq:u3const} \\
	v_\mu & = \Bigl (
	\gamma^p + Q^*(\mfm, \mfn)^{p/2}
	\Bigr )^{-1/p} \frac{1}{\zeta^2} n_\mu, \label{eq:vconst} \\
	v_3-1 &= \Bigl (
	\gamma^p + Q^*(\mfm, \mfn)^{p/2}
	\Bigr )^{-1/p} \frac{1}{\beta^2 \eta^2 - \iota^2} (-\iota m_3 + \beta^2 n_3), \label{eq:v3const}
\end{align}
where Greek letters range over $\{1,2\}$.
We note that the model is strain-limiting in the sense that $\mfu, \mfv \in \bbR^3$ given by \eqref{eq:uconst}, \eqref{eq:u3const}, \eqref{eq:vconst} and \eqref{eq:v3const} satisfy 
\begin{align}
	Q(\mfu, \mfv) < 1, \label{eq:strainlimiting}
\end{align}
and thus, 
\begin{gather}
	(u_1^2 + u_2^2)^{1/2} < \frac{1}{\al}, \label{eq:u1u2bd} \\
	|u_3| < \eta (\beta^2 \eta^2 - \iota^2)^{-1/2}, \label{eq:u3bd} \\
	(v_1^2 + v_2^2)^{1/2} < \frac{1}{\zeta}, \label{eq:v1v2bd} \\
	|v_3 - 1| < \beta (\beta^2 \eta^2 - \iota^2)^{-1/2}. \label{eq:v3bd}  
\end{gather}
The constitutive relations \eqref{eq:uconst}, \eqref{eq:u3const}, \eqref{eq:vconst} and \eqref{eq:v3const} are derivable from a complementary energy $W^*(\mfm, \mfn)$, 
\begin{align}
	\begin{bmatrix}
		u_1 \\
		u_2 \\
		u_3 
	\end{bmatrix}	
		 = \frac{\p W^*}{\p \mfm}(\mfm, \mfn), \quad \begin{bmatrix}
		v_1 \\
		v_2 \\
		v_3 - 1
	\end{bmatrix} = \frac{\p W^*}{\p \mfn}(\mfm, \mfn),
\end{align}
where 
\begin{gather}
	W^*(\mfm, \mfn)
	= \frac{1}{2}\int_0^{Q^*(\mfm, \mfn)} (\gamma^p + t^{p/2})^{-1/p} dt. \label{eq:compenergy} 
\end{gather}

We observe that the relations \eqref{eq:uconst}, \eqref{eq:u3const}, \eqref{eq:vconst} and \eqref{eq:v3const} can be inverted to obtain the contact couple and contact force as functions of the strains, 
\begin{align}
	m_\mu &= \frac{\p W}{\p u_\mu} = \gamma \Bigl (
	1 - Q(\mfu, \mfv)^{p/2}
	\Bigr )^{-1/p} \alpha^2 u_\mu, \label{eq:mconst}\\ 
	m_3 &= \frac{\p W}{\p u_3} = \gamma \Bigl (
	1 - Q(\mfu, \mfv)^{p/2}
	\Bigr )^{-1/p} (\beta^2 u_3 + \iota (v_3 - 1) ), \label{eq:m3const} \\
	n_\mu &= \frac{\p W}{\p v_\mu} = \gamma \Bigl (
	1 - Q(\mfu, \mfv)^{p/2}
	\Bigr )^{-1/p} \zeta^2 v_\mu, \label{eq:nconst}\\
	n_3 &= \frac{\p W}{\p v_3} = \gamma \Bigl (
	1 - Q(\mfu, \mfv)^{p/2}
	\Bigr )^{-1/p} (\iota u_3 + \eta^2(v_3 - 1)), \label{eq:n3const}
\end{align}
for $\mfu, \mfv \in \bbR^3$ satisfying $Q(\mfu, \mfv) < 1$. {
The fact that the constitutive relations can be inverted is a consequence of the type of strain-limiting behavior \eqref{eq:uconst}, \eqref{eq:u3const}, \eqref{eq:vconst} and \eqref{eq:v3const} exhibit: the strains increase to their limiting values \emph{asymptotically} (see Section 3.4 for monotonicity). If, instead, the strain-limiting behavior was of the type wherein the strains reached a finite limiting value for all sufficiently large finite values of $|\mfn|$ and $|\mfm|$, then the relations between $(\mfu, \mfv)$ and $(\mfm, \mfn)$ would not be invertible (see Figure \ref{f:fig2}). 
} We now observe that our model is \emph{hyperelastic}: the relations \eqref{eq:mconst}, \eqref{eq:m3const}, \eqref{eq:nconst} and \eqref{eq:n3const} are derivable from a stored energy $W(\mfu, \mfv)$, 
\begin{align}
	\begin{bmatrix}
		m_1 \\
		m_2 \\
		m_3 
	\end{bmatrix}  = \frac{\p W}{\p \mfu}(\mfu, \mfv), \quad
	\begin{bmatrix}
	n_1 \\
	n_2 \\
	n_3 
\end{bmatrix}  = \frac{\p W}{\p \mfv}(\mfu, \mfv) 
\end{align}
where 
\begin{align}
	W(\mfu, \mfv)
= \frac{\gamma}{2}\int_0^{Q(\mfu, \mfv)} (1 - t^{p/2})^{-1/p} dt. \label{eq:stenergy}
\end{align}

The material constants $\alpha, \beta$ and $\iota$ scale like length, $\gamma$ scales like force, and $\zeta$ and $\eta$ are dimensionless. In particular, our model is parameterized by seven dimensionless parameters: $\gamma/F$, $\al/L, \beta/L, \iota/L$, $\zeta$, $\eta$ and $p$, where $F$ is the force scale. All but the constant $p$ admit the following physical interpretation. We note that the stress-strain relations  \eqref{eq:mconst}, \eqref{eq:m3const}, \eqref{eq:nconst} and \eqref{eq:n3const} are, to leading order in $\mfu$ and $\mfv - [0 \,\, 0 \,\, 1]^*$, given by
\begin{align}
	m_\mu &= \gamma \al^2 u_\mu, \\
	m_3 &= \gamma \beta^2 u_3 + \gamma \iota (v_3 - 1), \\
	n_\mu &= \gamma \zeta^2 n_\mu, \\
	n_3 &= \gamma \iota u_3 + \gamma \eta^2(v_3 - 1).  
\end{align}
Thus, the four constants $\gamma \al^2, \gamma \beta^2,$ $\gamma \zeta^2$ and $\gamma \eta^2$ can be interpreted as the rod's small-strain \emph{bending, twisting, shearing} and \emph{dilatational} material moduli. The constant $\gamma \iota$ couples twisting to extension and specifies the chirality of the rod.

\subsection{Orientation preservation}

We recall that a condition on the strain $v_3$ that fixes an orientation and guarantees that the stretch of the rod is never zero, $|\bs r'| > 0$, is that for all $s \in [0,L]$,
\begin{align}
	v_3(s) > 0. \label{eq:v3sign}
\end{align} 
If the material constants characterizing our rod model satisfy 
\begin{align}
	\beta(\beta^2 \eta^2 - \iota^2)^{-1/2} < 1 \iff 1 + \frac{\iota^2}{\beta^2} < \eta^2, \label{eq:matconst1}
\end{align}
then by \eqref{eq:v3bd} an arbitrary configuration of the rod satisfies \eqref{eq:v3sign}.  

Although mathematically 1-dimensional manifolds, special Cosserat rods are used to model slender 3-dimensional bodies. The condition \eqref{eq:v3pos} is a mild condition ensuring orientation preservation of the 1-dimensional object. However, as discussed by Antman in Section 4 of \cite{ANTMAN76}, consideration of the constrained deformation of the associated slender 3-dimensional body suggests requiring a much stronger condition for orientation preservation, described as follows. 

For simplicity of the ensuing discussion, suppose that for each $s$, the material cross section $\cl A(s)$ located at the material point $s$ of the slender body being modeled is circular  
\begin{align}
	\cl A(s)  = \Bigl \{ (x^1, x^2) \mid 
	(x^1)^2 + (x^2)^2\leq a^2
	\Bigr \}.
\end{align} 
We denote the reference configuration of 
 the slender $3$-dimensional body by
\begin{align} 
	\cl B = \{ (x^1, x^2, s) \mid s \in [0,L], (x^1,x^2) \in \cl A(s) \}.
\end{align}
Then the constrained deformation of the slender body
$\bs \chi : \cl B \rar \bbE^3$ determined by an arbitrary configuration $\bs r, \bs d_1, \bs d_2$ for a special Cosserat rod via 
\begin{align}
	\bs \chi (x^1, x^2, s) = \bs r(s) + x^1 \bs d_1(s) + x^2 \bs d_2(s), 
\end{align}
preserves orientation in the sense that
\begin{align}
	\det \nabla \bs \chi > 0 \mbox{ on } \cl B,
\end{align}
if and only if 
\begin{align}
	v_3(s) > 
	a ((u_1(s))^2 + (u_2(s))^2)^{1/2}.  \label{eq:v3strongpos}
\end{align}

We now show that under a mild constraint on the cross sectional radii and the parameters, a slender body modeled by our special Cosserat rod model satisfies \eqref{eq:v3strongpos}. Indeed, if  
\begin{align}
	a < \alpha (1 - \beta (\beta^2 \eta^2 - \iota^2)^{-1/2}), \label{eq:matconst2}
\end{align} 
 then by \eqref{eq:u1u2bd} and \eqref{eq:v3bd} the strains associated to an arbitrary configuration of our strain-limiting rod satisfy  
\begin{align}
	a ((u_1(s))^2 + (u_2(s))^2)^{1/2}
	\leq \frac{a}{\alpha} < 1 - \beta (\beta^2 \eta^2 - \iota^2)^{-1/2} < v_3(s).
\end{align}

\subsection{Transverse symmetry}

We now discuss the transverse symmetry of our model. Let
\begin{align}
\msE = \begin{bmatrix}
	1 & 0 & 0 \\
	0 & -1 & 0 \\
	0 & 0 & 1
\end{bmatrix}, \quad 
	\msR_\psi = \begin{bmatrix}
		\cos \psi & \sin \psi & 0 \\
		-\sin \psi & \cos \psi & 0 \\
		0 & 0 & 1
	\end{bmatrix}, \quad \psi \in [0,2\pi), 
\end{align}
A hyperelastic rod with stored energy density $\Phi$ is \emph{transversely hemitropic} if for all $\theta \in [0,2\pi)$, $\mfu, \mfv \in \bbR^3$,
$
	 \Phi(\msR_\psi \mfu, \msR_\psi \mfv) = \Phi(\mfu, \mfv). 
$
The rod is \emph{flip-symmetric} if it is transversely hemitropic and for all $\mfu, \mfv \in \bbR^3$,
$
\Phi (\msE \mfu, \msE \mfv) = \Phi(\mfu, \mfv),
$
and the rod is \emph{isotropic} if it is transversely hemitropic and for all 
$\mfu, \mfv \in \bbR^3$, 
$
	\Phi (\msE \mfu, -\msE \mfv) = \Phi(\mfu, \mfv).
$
Thus, our model with strain energy \eqref{eq:stenergy} is hemitropic and flip-symmetric. It is isotropic if and only if the twist-stretch coupling constant $\iota = 0$. We refer the reader to \cite{Healey02} for more on the notion of transverse symmetry for special Cosserat rods. 

\subsection{Monotonicity}

By using $L$ and $\gamma$ as our length and force scales, respectively, and appropriately nondimensionalizing the variables allows us to set 
\begin{align}
	L = 1, \quad \gamma = 1,
\end{align}
for the remainder of this paper. 

We now prove a mathematically attractive monotonicity property of our model.  
The following proposition implies that an increase in the bending couple $m_\mu$ accompanies an increase in the flexure $u_\mu$, an increase in twisting couple $m_3$ accompanies an increase in the twist $u_3$, an increase in shear force $n_\mu$ accompanies an increase in the shear strain $v_\mu$, and an increase in tension $n_3$ accompanies an increase in the dilatation strain $v_3$. 

\begin{prop}
	If $\mfu, \mfv \in \bbR^3$ and $Q(\mfu, \mfv) < 1$, then the Hessian of the stored energy density,
	\begin{align}
		D^2 W(\mfu, \mfv) = 
		\begin{bmatrix}
			\frac{\p \mfm}{\p \mfu}(\mfu,\mfv) & 	\frac{\p \mfm}{\p \mfv}(\mfu,\mfv) \\
			\frac{\p \mfn}{\p \mfu}(\mfu,\mfv) &
			\frac{\p \mfn}{\p \mfv}(\mfu, \mfv)
		\end{bmatrix},
	\end{align}
	is positive definite.  
\end{prop}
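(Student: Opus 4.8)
The plan is to exploit the fact that $W$ is a scalar function of the single positive-definite quadratic form $Q$, so that its Hessian decomposes into a rank-one piece plus a positive multiple of the Hessian of $Q$. Recall that with $\gamma = 1$ we have $W(\mfu, \mfv) = \phi(Q(\mfu, \mfv))$, where
\begin{align}
	\phi(t) = \frac{1}{2} \int_0^t (1 - s^{p/2})^{-1/p}\, ds, \qquad t \in [0,1).
\end{align}
First I would record two elementary facts about $\phi$ on $[0,1)$: it is $C^2$ with $\phi'(t) = \tfrac{1}{2}(1 - t^{p/2})^{-1/p} > 0$, and $\phi''(t) \geq 0$. The second assertion holds because the integrand $t \mapsto (1 - t^{p/2})^{-1/p}$ is nondecreasing on $(0,1)$, being the composition of the decreasing map $x \mapsto x^{-1/p}$ (here $p > 0$) with the positive, decreasing map $t \mapsto 1 - t^{p/2}$.

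Next I would make the underlying linear algebra explicit. Writing $\bs w = (u_1, u_2, u_3, v_1, v_2, v_3 - 1)^* \in \bbR^6$, the quadratic form \eqref{eq:quadraticform} reads $Q = \bs w^* A \bs w$ for the symmetric matrix $A$ that is block-diagonal with diagonal entries $\alpha^2, \alpha^2$, the $2\times 2$ block $\bigl[\begin{smallmatrix} \beta^2 & \iota \\ \iota & \eta^2 \end{smallmatrix}\bigr]$ (acting on the pair $u_3, v_3 - 1$), and $\zeta^2, \zeta^2$. Under the standing hypotheses $\alpha, \zeta > 0$ and $\beta^2\eta^2 - \iota^2 > 0$, this matrix $A$ is positive definite. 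Since passing from $(\mfu, \mfv)$ to $\bs w$ is a mere translation in the $v_3$-variable, the Hessian of $W$ in the $(\mfu, \mfv)$ variables coincides with its Hessian in $\bs w$.

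Finally I would apply the chain rule. With $\nabla_{\bs w} Q = 2 A \bs w$ and $D^2_{\bs w} Q = 2A$, differentiating $W = \phi(Q)$ twice yields
\begin{align}
	D^2 W(\mfu, \mfv) = 4\, \phi''(Q)\, (A \bs w)(A \bs w)^* + 2\, \phi'(Q)\, A.
\end{align}
The first term is positive semidefinite, being the nonnegative scalar $4\phi''(Q)$ times the rank-one matrix $(A\bs w)(A\bs w)^*$; the second term is positive definite, being the product of the strictly positive scalar $2\phi'(Q)$ with the positive definite matrix $A$. A positive definite matrix plus a positive semidefinite one is positive definite, so $D^2 W$ is positive definite on $\{Q < 1\}$, which is the claim.

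The only point requiring any genuine care — and the one I expect to be the sole obstacle, really one of bookkeeping — is confirming the sign of $\phi''$, i.e.\ the monotonicity of the integrand $(1 - t^{p/2})^{-1/p}$ on $(0,1)$; once $W$ is recognized as an increasing, convex scalar function composed with a single positive definite quadratic form, everything else is routine linear algebra, and in fact the strict positivity of $\phi'$ alone (together with the positive definiteness of $A$) already suffices, the rank-one term only helping.
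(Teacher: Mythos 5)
Your proof is correct and is essentially the paper's argument in cleaner packaging: the paper computes every second partial of $W$ entrywise and then verifies that the quadratic form of the Hessian equals $(1-Q(\mfu,\mfv)^{p/2})^{-1/p}\,Q(\msa,\msb)$ plus $(1-Q(\mfu,\mfv)^{p/2})^{-1/p-1}Q(\mfu,\mfv)^{p/2-1}$ times the square of a linear form, which is exactly your decomposition $2\phi'(Q)A + 4\phi''(Q)(A\bs w)(A\bs w)^*$. The only substantive points — positivity of $\phi'$, nonnegativity of $\phi''$, and positive definiteness of $Q$ (equivalently of $A$, which requires $\beta^2\eta^2-\iota^2>0$ as you note) — are all handled correctly.
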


\begin{proof}
To establish the proposition, we compute  
\begin{align}
\frac{\p m_\mu}{\p u_\nu} &= \Bigl (1 - Q(\mfu, \mfv)^{p/2} \Bigr )^{-1/p-1} \\
&\quad\times \Bigl [
\bigl ( 1 - Q(\mfu, \mfv)^{p/2} \bigr )\al^2 \delta_{\mu \nu} + 
Q(\mfu, \mfv)^{p/2-1} (\al^2 u_\mu) (\al^2 u_\nu)
\Bigr  ], \\ 
\frac{\p m_\mu}{\p u_3} &= \frac{\p m_3}{\p u_\mu} \\
&=  \Bigl (1 - Q(\mfu, \mfv)^{p/2} \Bigr )^{-1/p-1} Q(\mfu,\mfv)^{p/2-1}  \al^2 u_\mu (\beta^2 u_3 + \iota (v_3-1)), 
\end{align}
as well as
\begin{align}
\frac{\p m_\mu}{\p v_\nu} &= \frac{\p n_\nu}{\p u_\mu} \\
&= \Bigl (1 - Q(\mfu, \mfv)^{p/2} \Bigr )^{-1/p-1} Q(\mfu, \mfv)^{p/2-1}   (\alpha^2 u_\mu) (\zeta^2 v_\nu), \\ 
\frac{\p m_\mu}{\p v_3} &= \frac{\p n_3}{\p u_\mu} \\
&= \Bigl (1 - Q(\mfu, \mfv)^{p/2} \Bigr )^{-1/p-1} Q(\mfu, \mfv)^{p/2-1} \alpha^2 u_\mu(\iota u_3 + \eta^2(v_3-1)),
\end{align}
and
\begin{align}
\frac{\p m_3}{\p u_3}  &= \Bigl (1 - Q(\mfu, \mfv)^{p/2} \Bigr )^{-1/p-1}  \\
&\quad \times \Bigl [
\bigl ( 1 - Q(\mfu, \mfv)^{p/2} \bigr )\beta^2  
Q(\mfu, \mfv)^{p/2-1} (\beta^2 u_3 + \iota(v_3-1))^2
\Bigr  ],  
\end{align}
and 
\begin{align}
\frac{\p m_3}{\p v_\nu} &= \frac{\p n_\nu}{\p u_3} \\
&= \Bigl (1 - Q(\mfu, \mfv)^{p/2} \Bigr )^{-1/p-1}   Q(\mfu, \mfv)^{p/2-1} (\beta^2 u_3 + \iota(v_3-1)) \zeta^2 v_\nu,  
\\ 
 \frac{\p m_3}{\p v_3} &= \frac{\p n_3}{\p u_3} \\
 &= \Bigl (1 - Q(\mfu, \mfv)^{p/2} \Bigr )^{-1/p-1} \\
&\quad \times \Bigl [
\bigl ( 1 - Q(\mfu, \mfv)^{p/2} \bigr )\iota \\ 
&\qquad + 
Q(\mfu, v_3-1)^{p/2-1} (\beta^2 u_3 + \iota(v_3-1))(\iota u_3 + \eta^2 (v_3-1))
\Bigr  ], 
\end{align}
and finally, 
\begin{align}
\frac{\p n_\mu}{\p v_\nu} &= \Bigl (1 - Q(\mfu, \mfv)^{p/2} \Bigr )^{-1/p-1} \\
&\quad\times \Bigl [
\bigl ( 1 - Q(\mfu, \mfv)^{p/2} \bigr )\zeta^2 \delta_{\mu \nu} + 
Q(\mfu, \mfv)^{p/2-1} (\zeta^2 v_\mu) (\zeta^2 v_\nu)
\Bigr  ], 
\\
\frac{\p n_\mu}{\p v_3} &= \frac{\p n_3}{\p v_\mu} \\
&=  \Bigl (1 - Q(\mfu, \mfv)^{p/2} \Bigr )^{-1/p-1} Q(\mfu,\mfv)^{p/2-1}  \zeta^2 v_\mu (\iota u_3 + \eta^2 (v_3-1)), 
\\ 
\frac{\p n_3}{\p v_3} &= 
\Bigl (1 - Q(\mfu, \mfv)^{p/2} \Bigr )^{-1/p-1} \\
&\times \Bigl [
\bigl ( 1 - Q(\mfu, \mfv)^{p/2} \bigr )\eta^2 + 
Q(\mfu, \mfv)^{p/2-1} (\iota u_3 + \eta^2 (v_3-1))^2
\Bigr  ]. 
\end{align} 
Then for all $\msa, \msb \in \bbR^3$, we have 
\begin{align}
\Bigl (1 &- Q(\mfu, \mfv) \Bigr )^{1/p+1}	
\begin{bmatrix}
	\msa \\
	\msb
\end{bmatrix}
  \cdot  D^2 W(\mfu, \mfv) \begin{bmatrix}
		\msa \\
			\msb
	\end{bmatrix} \\
&=
\bigl (1 - Q(\mfu, \mfv)^{p/2}) Q(\msa, \msb)
\\
&\quad + Q(\mfu, \mfv)^{p/2-1} \Bigl (
\al^2 a_\mu u_\mu + \zeta^2 b_\mu v_\mu +  a_3(\beta^2 u_3 + \iota (v_3-1)) 
\\&\hspace{1.5in}+ b_3(\iota u_3 + \eta^2 (v_3-1)) \Bigr )^2.
\end{align}
Since $Q(\cdot,\cdot)$ is positive definite, the proof is concluded. 
\end{proof}

\section{Some explicit equilibrium states}

In this section we discuss some solutions to \eqref{eq:neq} and \eqref{eq:meq} with no external body force or body couple and with an end thrust $\bs n(1)$ prescribed.
We assume that the rod is oriented so that $\bs n(1) = N \bs g_3$, where $N \in \bbR$. Then by \eqref{eq:neq}, for all $s \in [0,1]$
\begin{align}
	\bs n(s) = N \bs g_3. 
\end{align}   

Our discussion is from the semi-inverse standpoint rather than considering a fixed boundary value problem.   Some aspects of the equilibrium state will be fixed at the start in order to solve for the kinematic variables and obtain \emph{simple, explicit} solutions. In particular, we consider equilibrium states which are subject to either isolated contact forces or isolated contact couples, and in the latter case we consider only helical states.\footnote{In this work, a helical state is an equilibrium state such that the director $\bs d_3(s)$ has a constant polar angle in a fixed system of spherical coordinates.  The general study of helical states was initiated by Kirchhoff \cite{Kirchoff1859} for uniform, inextensible, unshearable rods with linear relations between the strains and components of the contact couple. Antman \cite{Antman74} and Chousiab-Maddocks \cite{Maddocks04} established some qualitative properties of helical states for fairly general constitutive relations describing a uniform special Cosserat rod.} 

\subsection{Euler angles and reduced equations}

As in previous works (see for example \cite{Love44}, \cite{WhitmanDeSilve74}, \cite{Antman74}) we compute explicit solutions by expressing the directors and equilibrium equations using Euler angles, defined as follows.   

Using spherical coordinates $(\varphi,\theta)$ for the sphere $\bbS^2$, we can write 
\begin{align}
	\bs d_3 = \sin \theta (\cos \varphi \bs g_1 + \sin \varphi \bs g_2) + \cos \theta \bs g_3.  \label{eq:d3eq}
\end{align}
We then define a new right-handed orthonormal basis $\{ \bs e_k \}$ via 
\begin{gather}
	\bs e_3 = \bs d_3, \bs e_2 = -\sin \varphi \bs g_1 + \cos \varphi \bs g_2, \\
	 \bs e_1 = \bs e_2 \times \bs e_3
	 = \cos \theta (\cos \varphi \bs g_1 + \sin \varphi \bs g_2 ) - \sin \theta \bs g_3.  
\end{gather}
Since $\bs d_1, \bs d_2$ are orthogonal to $\bs d_3 = \bs e_3$, there exists $\psi \in \bbR$ such that 
\begin{align}
	\bs d_1 &= \cos \psi \bs e_1 + \sin \psi \bs e_2, \label{eq:d1eq}\\
	\bs d_2 &= -\sin \psi \bs e_1 + \cos \psi \bs e_2. \label{eq:d2eq}
\end{align}
The angles $(\varphi,\theta, \psi)$ are referred to as the Euler angles parameterizing the directors $\{ \bs d_k \}$. We can then express $\bs n = N \bs g_3$ as 
\begin{align}
	\bs n &= -N \sin \theta \cos \psi \bs d_1 + N \sin \theta \sin \psi \bs d_2 + N \cos \theta \bs d_3. \label{eq:nsoln}
\end{align}
In the basis $\{ \bs e_k \}$, the contact couple $\bs m = m_k \bs d_k = M_k \bs e_k$ with
\begin{align}
	M_1 &= m_1 \cos \psi - m_2 \sin \psi, \label{eq:M1eq} \\
	M_2 &= m_1 \sin \psi + m_2 \cos \psi, \label{eq:M2eq} \\
	M_3 &= m_3, \label{eq:M3eq}
\end{align}
and the contact force $\bs n = N_k \bs e_k$ with 
\begin{align}
	N_1 &= -N \sin \theta, \label{eq:N1eq} \\
	N_2 &= 0, \label{eq:N2eq} \\
	N_3 &= N \cos \theta. \label{eq:N3eq} 
\end{align}

Using the expressions for the directors in terms of the Euler angles, one readily verifies the following relationships between the strains and Euler angles: 
\begin{align}
	u_1 &= \theta' \sin \psi - \varphi' \sin \theta \cos \psi, \label{eq:u1angle} \\ 
	u_2 &= \theta_s \cos \psi + \varphi' \sin \theta \sin \psi, \label{eq:u2angle} \\
	u_3 &= \psi' + \varphi' \cos \theta. \label{eq:u3angle}
\end{align}
If we denote 
\begin{align}
	u &= \Bigl ( 1 +  Q^*(\mfm, \mfn)^{p/2} \Bigr )^{-1/p} \frac{1}{\al^2}, \\ 
	v &= \Bigl ( 1 +  Q^*(\mfm, \mfn)^{p/2} \Bigr )^{-1/p} \frac{1}{\zeta^2}, 
\end{align}
then $u_\mu = u \cdot m_\mu$, $v_\mu = v \cdot n_\mu$, $\mu = 1,2$. Then \eqref{eq:u1angle}, \eqref{eq:u2angle}, \eqref{eq:u3angle}, \eqref{eq:M1eq}, and \eqref{eq:M2eq} imply that
\begin{align}
	\sin \theta \varphi' &= - u M_1, \label{eq:phiseq} \\
	\theta' &= - u M_2, \label{eq:thetaseq} \\
	\psi' + \cos \theta \varphi' &= u_3. \label{eq:psiseq} 
\end{align}
Expressing \eqref{eq:meq} in the basis $\{ \bs e_k \}$ using \eqref{eq:N1eq}, \eqref{eq:N2eq}, \eqref{eq:N3eq} yields
\begin{align}
	M_1' - M_2 \cos \theta \varphi' + \theta' M_3 &= 0, \label{eq:M1seq} \\
	M_2' + (M_1 \cos \theta + M_3 \sin \theta) \varphi' &= N v_3 \sin \theta - N^2 v \cos \theta \sin \theta, \label{eq:M2seq} \\
	M_3' &= 0. \label{eq:M3seq}
\end{align}
The previous six equations are for the six unknowns $\varphi, \theta, \psi, M_1, M_2, M_3$.

\subsection{Rod subject to isolated contact forces} 

In this subsection we will consider the case when only contact forces are present: for all $s \in [0,1]$
\begin{align}
\bs m(s) = \bs 0 \iff M_1(s) = M_2(s) = M_3(s) = 0. 
\end{align} 
By the equilibrium equations, an equilibrium state under pure contact forces exists if and only if 
\begin{align}
	\sin \theta \varphi' &= 0, \label{eq:TTvarphi}\\
	\theta' &= 0, \\
	\psi' + \cos \theta \varphi' &= u_3 \\
	N \sin \theta(v_3 - v N \cos \theta) &= 0. \label{eq:TTtheta} 
\end{align}
Thus, $\theta$ is a constant and is determined by $N$ via \eqref{eq:TTtheta}.

\begin{prop}
The center line of an equilibrium state satisfying \eqref{eq:TTtheta} is parallel to $\bs g_3$.
\end{prop}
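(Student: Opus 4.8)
The plan is to compute the tangent $\bs r'(s) = v_k(s)\bs d_k(s)$ explicitly and show that it points along $\bs g_3$. Since $\bs m = \bs 0$ we have $m_1 = m_2 = m_3 = 0$, so the constitutive relations \eqref{eq:uconst} and \eqref{eq:vconst} give $u_1 = u_2 = 0$ and $v_\mu = v\,n_\mu$ for $\mu = 1,2$, where $v = (1 + Q^*(\mfm,\mfn)^{p/2})^{-1/p}\zeta^{-2}$ is the scalar introduced just before the reduced equations. The key structural fact is that the shear strains are simply $v$ times the shear forces $n_\mu$; this is exactly what produces the cancellation below, and I will leave $v_3$ as prescribed by \eqref{eq:v3const}.

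First I would rewrite the transverse part of the tangent without ever expanding in $\psi$. Because $v_\mu = v\,n_\mu$, one has $v_1\bs d_1 + v_2\bs d_2 = v(n_1\bs d_1 + n_2\bs d_2) = v(\bs n - n_3\bs d_3)$. Using $\bs n = N\bs g_3$ and $n_3 = N\cos\theta$ from \eqref{eq:nsoln}, together with $\bs d_3 = \bs e_3 = \sin\theta\,\bs p + \cos\theta\,\bs g_3$ where $\bs p := \cos\varphi\,\bs g_1 + \sin\varphi\,\bs g_2$, this collapses to $v_1\bs d_1 + v_2\bs d_2 = -vN\sin\theta\,\bs e_1$, since $\bs e_1 = \cos\theta\,\bs p - \sin\theta\,\bs g_3$. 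Hence $\bs r' = -vN\sin\theta\,\bs e_1 + v_3\,\bs e_3$. Substituting the fixed-frame expressions for $\bs e_1,\bs e_3$ and collecting the component along $\bs p$ (which is orthogonal to $\bs g_3$) and the component along $\bs g_3$ yields $\bs r' = \sin\theta(v_3 - vN\cos\theta)\,\bs p + (vN\sin^2\theta + v_3\cos\theta)\,\bs g_3$.

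The conclusion then follows from the equilibrium relation \eqref{eq:TTtheta}. The coefficient of the transverse direction $\bs p$ is precisely $\sin\theta(v_3 - vN\cos\theta)$, and \eqref{eq:TTtheta} asserts that $N$ times this quantity vanishes. For a genuine end thrust $N \neq 0$ this forces the transverse coefficient to be zero, so $\bs r' = (vN\sin^2\theta + v_3\cos\theta)\,\bs g_3$ is parallel to $\bs g_3$. Moreover $\theta$ and $N$ are constant, so $Q^*(\mfm,\mfn)$ is constant along the rod (here $\mfm = \bs 0$ and $n_1^2+n_2^2 = N^2\sin^2\theta$, $n_3^2 = N^2\cos^2\theta$), whence $v$ and $v_3$ are constant and the center line is in fact a straight segment along $\bs g_3$.

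The only real care is needed in the degenerate configurations. If $\sin\theta = 0$ the transverse term drops out automatically, so that $\bs d_3 = \pm\bs g_3$ and $\bs r' = \pm v_3\bs g_3$, with no appeal to \eqref{eq:TTtheta} required; this also disposes of any $N=0$ state with $\theta\in\{0,\pi\}$. The substantive case is $\sin\theta \neq 0$, where one divides \eqref{eq:TTtheta} by $N\sin\theta$ to annihilate the transverse coefficient, and this is exactly where $N \neq 0$ enters. (When $N = 0$ there is no thrust and the direction $\bs g_3$ is not singled out by the data, so the statement is to be read for $N \neq 0$.) I expect the main obstacle to be purely the change-of-frame bookkeeping — converting between $\{\bs d_k\}$, $\{\bs e_k\}$ and $\{\bs g_k\}$ via \eqref{eq:d3eq} and \eqref{eq:d1eq}--\eqref{eq:d2eq} and verifying the transverse cancellation — rather than any analytic difficulty; once $\bs r'$ is written in the $\{\bs p,\bs g_3\}$ split, its transverse coefficient is literally the left-hand side of \eqref{eq:TTtheta} divided by $N$.
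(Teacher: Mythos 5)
Your proof is correct and follows essentially the same route as the paper: both arguments use $v_\mu = v\,n_\mu$ together with $\bs n = N\bs g_3$ to show that the transverse part of $\bs r' = v_k\bs d_k$ is proportional to $\sin\theta\,(v_3 - vN\cos\theta)$, which vanishes by \eqref{eq:TTtheta}; the paper simply takes inner products with $\bs g_1$ and $\bs g_2$ instead of passing through the $\{\bs e_k\}$ frame. Your explicit handling of the degenerate cases $\sin\theta = 0$ and $N = 0$ (the latter being implicitly excluded in the paper's one-line conclusion) is a small but welcome refinement.
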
 

\begin{proof}
We compute the following inner products using $\bs r' = v_k \bs d_k$ along with $v_\mu = v \cdot n_\mu$ and \eqref{eq:nsoln}:  
\begin{align}
	\bs r' \cdot \bs g_1 &= \sin \theta \cos \varphi (v_3 - v N \cos \theta)
	= 0, \\
	\bs r' \cdot \bs g_2 &= \sin \theta \sin \varphi (v_3 - v N \cos \theta )
	= 0,
\end{align}
Thus, $\bs r'$ is parallel to $\bs g_3$. 
\end{proof}

We consider two cases for \eqref{eq:TTtheta}: 
\begin{align}
	\sin \theta = 0 \quad \mbox{or} \quad v_3 = vN \cos \theta. \label{eq:TTalt}
\end{align}
For the first case, we consider only the sub-case $\theta = 0$ (the sub-case $\theta = \pi$ can be analyzed similarly) from which it follows that $\bs d_3 = \bs g_3$ and $\bs n = N \bs d_3 = N \bs g_3$. We immediately conclude the following.  

\begin{prop}
Assume that $\theta = 0$ and, after a proper rotation of the plane spanned by $\{ \bs g_1, \bs g_2\}$ if necessary, that $\varphi(0) = 0$. Then the strains of the associated equilibrium state are constant and given by  
\begin{align}
	u_1 &= u_2 = 0, \\
	u_3 &= \Bigl ( 1 + \frac{\beta^p}{(\beta^2 \eta^2 - \iota^2)^{p/2}} |N|^p \Bigr )^{-1/p} \frac{-\iota N}{\beta^2 \eta^2 - \iota^2},
	\label{eq:TTu3}
	\\
	v_1 &= v_2 = 0, \\
		v_3 - 1 &= \Bigl ( 1 + \frac{\beta^p}{(\beta^2 \eta^2 - \iota^2)^{p/2}} |N|^p \Bigr )^{-1/p} \frac{\beta^2 N}{\beta^2 \eta^2 - \iota^2}.
\end{align}
The center line of the rod is parallel to $\bs g_3$,
\begin{align}
	\bs r'(s) = (v_1^2 + v_2^2 + v_3^2)^{1/2} \bs g_3,
\end{align}
 and the directors are given by 
	\begin{align}
		\bs d_1(s) &= \cos (u_3 s + \psi(0)) \bs g_1 + \sin (u_3 s + \psi(0)) \bs g_2, \\ 	
		\bs d_2(s) &= 
		-\sin (u_3 s + \psi(0)) \bs g_1 + \cos (u_3 s + \psi(0)) \bs g_2, \\
		\bs d_3(s) &= \bs g_3.
	\end{align}
\end{prop}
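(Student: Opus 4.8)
The plan is to treat this as a direct computation, since with $\theta$ fixed the geometry is rigid and everything reduces to substituting into the constitutive relations and kinematic identities already in hand. First I would exploit $\theta = 0$: by \eqref{eq:d3eq} this forces $\bs d_3 = \bs g_3$, and by \eqref{eq:nsoln} the contact force collapses to $\bs n = N \bs d_3$, so in the director frame $n_1 = n_2 = 0$ and $n_3 = N$. Together with the standing assumption of the subsection that $\bs m = \bs 0$, this gives $m_1 = m_2 = m_3 = 0$. For completeness one checks that the couple balance equations \eqref{eq:M1seq}--\eqref{eq:M3seq} hold trivially, since all $M_k$ vanish and the right-hand side of \eqref{eq:M2seq} is proportional to $\sin\theta = 0$.

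Next I would compute the invariant $Q^*(\mfm, \mfn)$. With every couple component and both shear forces equal to zero, only the $n_3^2$ contribution survives, so $Q^*(\mfm, \mfn) = \frac{\beta^2}{\beta^2\eta^2 - \iota^2} N^2$ and hence $Q^*(\mfm, \mfn)^{p/2} = \frac{\beta^p}{(\beta^2\eta^2 - \iota^2)^{p/2}} |N|^p$. Feeding this into \eqref{eq:uconst}--\eqref{eq:v3const} (with $\gamma = 1$) gives $u_1 = u_2 = v_1 = v_2 = 0$ at once, because their numerators $m_\mu$ and $n_\mu$ vanish, and it produces exactly the stated closed forms for $u_3$ and $v_3 - 1$. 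Since $N$ is constant along the rod, so are all the strains.

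It then remains to identify the center line and the directors. The direction of $\bs r'$ follows from the preceding proposition together with $v_1 = v_2 = 0$: writing $\bs r' = v_k \bs d_k = v_3 \bs g_3$ and recalling $\bs r' \parallel \bs g_3$ yields $\bs r'(s) = (v_1^2 + v_2^2 + v_3^2)^{1/2} \bs g_3$. For the directors, the one point that needs care is that the Euler-angle parameterization is singular at $\theta = 0$ (gimbal lock): $\varphi$ and $\psi$ are not separately determined, only their sum. Setting $\theta = 0$ in \eqref{eq:d1eq}--\eqref{eq:d2eq} and using the angle-addition identities, I expect $\bs d_1$ and $\bs d_2$ to depend only on $\Psi := \psi + \varphi$, namely $\bs d_1 = \cos\Psi\, \bs g_1 + \sin\Psi\, \bs g_2$ and $\bs d_2 = -\sin\Psi\, \bs g_1 + \cos\Psi\, \bs g_2$. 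Equation \eqref{eq:psiseq} then reads $\Psi' = \psi' + \cos\theta\, \varphi' = u_3$, and integrating the constant $u_3$ with the normalization $\varphi(0) = 0$ gives $\Psi(s) = u_3 s + \psi(0)$, which is precisely the claimed form. The main (and only) obstacle is this degeneracy of the coordinates at $\theta = 0$; it is resolved by passing to the gauge-invariant combination $\psi + \varphi$ rather than tracking the two angles individually.
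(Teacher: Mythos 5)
Your proposal is correct and follows essentially the same route the paper takes: the paper states this proposition as an immediate consequence of the preceding setup ($\theta = 0$ forces $\bs d_3 = \bs g_3$ and $\bs n = N\bs d_3$, so $n_\mu = m_k = 0$, and the strains drop out of \eqref{eq:uconst}--\eqref{eq:v3const} with $Q^*(\mfm,\mfn) = \frac{\beta^2}{\beta^2\eta^2-\iota^2}N^2$), which is exactly your computation. Your explicit handling of the Euler-angle degeneracy at $\theta = 0$ via the combination $\psi + \varphi$ is a point the paper leaves implicit, and it is resolved correctly.
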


We emphasize that if $\theta = 0$, then the configuration is unsheared, $v_1 = v_2 = 0,$ and if $\iota \neq 0$, then the rod is twisted, $u_3 \neq 0$, under an isolated end thrust.  In accordance with common experiences with ropes, threads, or dishtowels unwinding when stretched, and because of \eqref{eq:TTu3}, we interpret $\iota > 0$ as modeling right-handed chirality and $\iota < 0$ as modeling left-handed chirality. As $N \rar \pm \infty$, we obtain the nonzero limiting strains: 
\begin{align}
	\lim_{N \rar \pm \infty} u_3 = \mp \frac{\iota}{\beta}(\beta^2 \eta^2 - \iota^2)^{-1/2}, \\
	\lim_{N \rar \pm \infty} (v_3 - 1) = \pm \beta (\beta^2 \eta^2 - \iota^2)^{-1/2}. \label{eq:maxv3}
\end{align} 

We now consider the case 
\begin{align}
	v_3 = v N \cos \theta, \quad \theta \in (0,\pi). \label{eq:TTcase2}
\end{align}
Since $v$ and $v_3$ are positive we must have $n_3 = N\cos \theta > 0$. Then the rod must be in a tensile state, $v_3 - 1 > 0$. We will focus on the sub-case $\theta \in (0,\pi/2)$ (the other sub-case $\theta \in (\pi/2,\pi)$ can be analyzed similarly). This is equivalent to assuming that $N > 0$. 

We have that \eqref{eq:TTcase2} is equivalent to 
\begin{align}
	v_3 &= \Bigl [ 1 + \Bigl (
	\frac{1}{\zeta^2} N^2 \sin^2 \theta + \frac{\beta^2}{\beta^2\eta^2 - \iota^2} N^2 \cos^2 \theta 
	\Bigr )^{p/2} \Bigr ]^{-1/p} \frac{n_3}{\zeta^2} \\
	&= \frac{\beta^2 \eta^2 - \iota^2}{\beta^2 \zeta^2} (v_3 - 1),
\end{align}
which is equivalent to 
\begin{align}
	v_3 - 1 = \Bigl ( \frac{\beta^2 \eta^2 - \iota^2}{\beta^2 \zeta^2} - 1 \Bigr )^{-1} \label{eq:TTv3}. 
\end{align}
Thus, a necessary condition for an equilibrium state satisfying \eqref{eq:TTcase2} to exist is that the material moduli satisfy 
\begin{gather}
	 \frac{\beta^2 \eta^2 - \iota^2}{\beta^2 \zeta^2} - 1 > 0
	\iff \frac{1}{\zeta^2} - \frac{\beta^2}{\beta^2 \eta^2 - \iota^2} > 0 \iff
	\eta^2 > \zeta^2 + \frac{\iota^2}{\beta^2}. \label{eq:TTin1}
\end{gather}
By \eqref{eq:maxv3} we also expect that a necessary condition for \eqref{eq:TTv3} being satisfied is that the limiting value for $v_3 - 1$ is strictly bigger than the right hand side of \eqref{eq:TTv3}:
\begin{gather}
	\Bigl ( \frac{\beta^2 \eta^2 - \iota^2}{\beta^2 \zeta^2} - 1 \Bigr )^{-1} < \frac{\beta}{(\beta^2 \eta^2 - \iota^2)^{1/2}} \label{eq:TTin2} \\
	\iff 
\Bigl (
\frac{\beta^2 \eta^2 - \iota^2}{\beta^2 \zeta^2} - 1
\Bigr )^p \frac{\beta^{2p}}{(\beta^2 \eta^2 - \iota^2)^p} > 
\frac{\beta^p}{(\beta^2 \eta^2 - \iota^2)^{p/2}}.
\end{gather}

We now prove that \eqref{eq:TTin1} and \eqref{eq:TTin2} are sufficient to solve \eqref{eq:TTv3} uniquely for  $\theta = \hat \theta(N) \in (0,\pi/2)$, for all $N > N_{thresh}$ with $N_{thresh}$ defined in Proposition \ref{p:bif}. In particular, it immediately follows that 
\begin{align}
	\{ (N, \hat \theta(N)) \mid N > N_{thresh} \}
\end{align} 
is a nontrivial branch of the set of solutions to \eqref{eq:TTtheta} bifurcating from the trivial branch $\{ (N, 0) \mid N \geq 0 \}$ of tensile solutions to \eqref{eq:TTtheta} at the bifurcation point $(N_{thresh}, 0)$. This nontrivial branch can be interpreted as a \emph{shearing instability} of the rod for large tensile end thrusts. 

\begin{prop}\label{p:bif}
Assume that the material moduli satisfy  \eqref{eq:TTin1} and \eqref{eq:TTin2}, and let $N_{thresh} > 0$ be defined via 
\begin{align}
	N_{thresh}^{-p} = 
	\Bigl (
	\frac{\beta^2 \eta^2 - \iota^2}{\beta^2 \zeta^2} - 1
	\Bigr )^p \frac{\beta^{2p}}{(\beta^2 \eta^2 - \iota^2)^p} - 
	\frac{\beta^p}{(\beta^2 \eta^2 - \iota^2)^{p/2}}. \label{eq:N0def}
\end{align}
If $N > N_{thresh}$ then
there exists a unique $\hat \theta(N) \in (0,\pi/2)$ such that
\begin{align}
	v_3 - 1 &= \Bigl [ 1 + \Bigl (
	\frac{1}{\zeta^2} N^2 \sin^2 \theta + \frac{\beta^2}{\beta^2\eta^2 - \iota^2} N^2 \cos^2 \theta 
	\Bigr )^{p/2} \Bigr ]^{-1/p} \\
	&\quad \times \frac{\beta^2}{\beta^2 \eta^2 - \iota^2} N \cos \theta \label{eq:TTv3def}
\end{align} 
with $\theta = \hat \theta(N)$ satisfies \eqref{eq:TTv3}. Moreover, 
\begin{align}
 \lim_{N \rar \infty} \hat \theta(N) = 
	\cos^{-1} \Bigl \{
	\Bigl [
	\Bigl (
	\frac{1}{\zeta^2} - \frac{\beta^2}{\beta^2 \eta^2 - \iota^2} 
	\Bigr )^2 + 
	\frac{1}{\zeta^2} - \frac{\beta^2}{\beta^2 \eta^2 - \iota^2} 
	\Bigr ]^{-1/2} \frac{1}{\zeta} \Bigr \}. \label{eq:thetainfinity}
\end{align}
\end{prop}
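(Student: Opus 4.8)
The plan is to convert the condition ``$\theta=\hat\theta(N)$ makes \eqref{eq:TTv3def} equal to the constant on the right of \eqref{eq:TTv3}'' into a single scalar equation $h(x)=(\rho/N)^p$ for a strictly monotone function $h$ of $x=\cos\theta$, and then to read off existence, uniqueness and the limit from the intermediate value theorem together with strict monotonicity. First I would abbreviate $\kappa=1/\zeta^2$, $\lambda=\beta^2/(\beta^2\eta^2-\iota^2)$, and let $\rho$ denote the right-hand side of \eqref{eq:TTv3}; by \eqref{eq:TTin1} one has $\kappa>\lambda>0$, and the single algebraic identity $\rho=\lambda/(\kappa-\lambda)>0$ (obtained from $\tfrac{\beta^2\eta^2-\iota^2}{\beta^2\zeta^2}=\kappa/\lambda$) will drive the whole computation. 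With $x=\cos\theta$ the bracketed argument in \eqref{eq:TTv3def} becomes $\kappa\sin^2\theta+\lambda\cos^2\theta=\kappa-(\kappa-\lambda)x^2\ge\lambda>0$ on $[0,1]$. Since for $x\in(0,1)$ both sides of the equation to be solved are strictly positive, raising to the $p$-th power is an equivalence, and clearing the denominator reduces the problem to
\begin{align*}
 h(x):=\lambda^p x^p-\rho^p\bigl(\kappa-(\kappa-\lambda)x^2\bigr)^{p/2}=\Bigl(\tfrac{\rho}{N}\Bigr)^p .
\end{align*}

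Next I would record the elementary properties of $h$ on $[0,1]$. Differentiating,
\begin{align*}
 h'(x)=p\lambda^p x^{p-1}+p\rho^p(\kappa-\lambda)x\bigl(\kappa-(\kappa-\lambda)x^2\bigr)^{p/2-1}>0\quad(0<x<1),
\end{align*}
so $h$ is a continuous strictly increasing bijection of $[0,1]$ onto $[h(0),h(1)]$, with $h(0)=-\rho^p\kappa^{p/2}<0$ and $h(1)=\lambda^{p/2}(\lambda^{p/2}-\rho^p)>0$, the latter sign being exactly the content of \eqref{eq:TTin2} (which says $\rho<\lambda^{1/2}$). The target value $(\rho/N)^p$ is positive and hence exceeds $h(0)$, so any root lies in $(0,1]$; and a short computation shows $(\rho/N)^p<h(1)$ is equivalent to $N^{-p}<\lambda^p\rho^{-p}-\lambda^{p/2}=N_{thresh}^{-p}$, i.e. to $N>N_{thresh}$, in which case the root lies strictly below $1$. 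Strict monotonicity then yields a unique $x\in(0,1)$, hence a unique $\hat\theta(N)=\arccos x\in(0,\pi/2)$ solving \eqref{eq:TTv3}. The same bookkeeping also confirms that \eqref{eq:N0def} defines a positive real number precisely when \eqref{eq:TTin2} holds.

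Finally, for the limit I would use that $\hat x(N)=h^{-1}\!\bigl((\rho/N)^p\bigr)$ with $h^{-1}$ continuous near $0\in(h(0),h(1))$, so $\hat x(N)\to h^{-1}(0)=:x_\infty$ as $N\to\infty$. The value $x_\infty$ solves $\lambda^p x^p=\rho^p(\kappa-(\kappa-\lambda)x^2)^{p/2}$; taking positive $p$-th roots and squaring gives $x_\infty=\rho\kappa^{1/2}(\lambda^2+\rho^2(\kappa-\lambda))^{-1/2}$, and substituting $\rho=\lambda/(\kappa-\lambda)$ and simplifying produces $x_\infty=\kappa^{1/2}((\kappa-\lambda)^2+(\kappa-\lambda))^{-1/2}$, which is exactly the cosine of the right-hand side of \eqref{eq:thetainfinity} since $\kappa^{1/2}=1/\zeta$ and $\kappa-\lambda=\tfrac{1}{\zeta^2}-\tfrac{\beta^2}{\beta^2\eta^2-\iota^2}$. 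Continuity of $\arccos$ then gives the stated limit.

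I do not expect a genuine obstacle: the argument is an intermediate-value-plus-monotonicity scheme. The only places demanding care are (i) verifying that raising to the $p$-th power and clearing the denominator is a true equivalence on $(0,1)$, so that no spurious roots are introduced, and (ii) the bookkeeping that matches the somewhat opaque expressions \eqref{eq:N0def} and \eqref{eq:thetainfinity} to the clean quantities $h(1)$ and $h^{-1}(0)$ above. Both of these collapse once the identity $\rho=\lambda/(\kappa-\lambda)$ is in hand, so the ``hard part'' is really just keeping the algebra organized.
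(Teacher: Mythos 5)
Your proof is correct, and at its core it is the same scheme as the paper's: reduce \eqref{eq:TTv3def}$=$\eqref{eq:TTv3} to a scalar equation in $x=\cos\theta$ on $[0,1]$, prove strict monotonicity using \eqref{eq:TTin1}, and locate the target value between the endpoint values using \eqref{eq:TTin2} and $N>N_{thresh}$. The one genuine difference is organizational but pays off at the end: the paper keeps $N$ inside its monotone function $f_N(x)=\bigl[N^{-p}+\bigl(\tfrac{1}{\zeta^2}(1-x^2)+\tfrac{\beta^2}{\beta^2\eta^2-\iota^2}x^2\bigr)^{p/2}\bigr]^{-1/p}\tfrac{\beta^2}{\beta^2\eta^2-\iota^2}x$ and solves $f_N(x)=\rho$, so to get the limit \eqref{eq:thetainfinity} it must extract a convergent subsequence of $\{x_{N_k}\}$ by compactness, pass to the limit in the equation, and invoke uniqueness of the subsequential limit. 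You instead clear the denominator so that all $N$-dependence sits on the right-hand side, $h(x)=(\rho/N)^p$ with $h$ independent of $N$ and strictly increasing; the limit then follows in one line from continuity of $h^{-1}$ at $0\in(h(0),h(1))$. Your version also makes the role of \eqref{eq:TTin2} and of the definition \eqref{eq:N0def} completely transparent (they are exactly $h(1)>0$ and $(\rho/N)^p<h(1)$, respectively), whereas in the paper these appear as endpoint computations for $f_N(1)$. The only points to keep explicit in a final write-up are the ones you already flagged: positivity of both sides so that taking $p$-th powers loses no solutions, and the exclusion of $x=0$ and $x=1$ as roots.
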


\begin{proof}
If $p = 2$, one can solve for $\hat \theta(N) \in (0, \pi/2)$ explicitly:
\begin{align}
\hat \theta(N) = \cos^{-1} \Bigl \{
\Bigl [
\Bigl (
\frac{1}{\zeta^2} - \frac{\beta^2}{\beta^2 \eta^2 - \iota^2} 
\Bigr )^2 + 
\frac{1}{\zeta^2} - \frac{\beta^2}{\beta^2 \eta^2 - \iota^2} 
\Bigr ]^{-1/2} \Bigl (
\frac{1}{N^2} + \frac{1}{\zeta^2}
\Bigr )^{1/2} \Bigr \}
\end{align}
In general, we write \eqref{eq:TTv3} as 
\begin{align}
	f_N(\cos \theta) = \Bigl (
	\frac{\beta^2 \eta^2 - \iota^2}{\beta^2 \zeta^2} - 1
	\Bigr )^{-1}
\end{align}
where 
\begin{align}
f_N(x) &= \Bigl [
1 + \Bigl ( \frac{N^2}{\zeta^2}(1-x^2) + \frac{\beta^2}{\beta^2 \eta^2 - \iota^2} N^2 x^2 
\Bigr )^{p/2}
\Bigr ]^{-1/p} \frac{N \beta^2}{\beta^2 \eta^2 - \iota^2} x \\
&= \Bigl [
N^{-p} + \Bigl ( \frac{1}{\zeta^2}(1-x^2) + \frac{\beta^2}{\beta^2 \eta^2 - \iota^2} x^2 
\Bigr )^{p/2}
\Bigr ]^{-1/p} \frac{\beta^2}{\beta^2 \eta^2 - \iota^2} x, 
\end{align}
for $x \in [0,1]$. By \eqref{eq:TTin1} we conclude that 
\begin{align}
	f'_N(x) &= \Bigl [
	N^{-p} + \Bigl ( \frac{1}{\zeta^2}(1-x^2) + \frac{\beta^2}{\beta^2 \eta^2 - \iota^2} x^2 
	\Bigr )^{p/2}
	\Bigr ]^{-1/p-1} \frac{\beta^2}{\beta^2 \eta^2 - \iota^2}\\
	&\quad \times \Bigl [
	1 + \Bigl ( \frac{1}{\zeta^2}(1-x^2) + \frac{\beta^2}{\beta^2 \eta^2 - \iota^2} x^2 
	\Bigr )^{p/2-1}\Bigl (
	\frac{1}{\zeta^2} - \frac{\beta^2}{\beta^2 \eta^2 - \iota^2}
	\Bigr ) x^2 
	\Bigr ] 
	 > 0.
\end{align}
Thus, $f_N : [0,1] \rar [0, f_N(1)]$ is an increasing bijection. We have 
\begin{align}
	f_N(1) = 
	\Bigl (
	N^{-p} + \frac{\beta^p}{(\beta^2 \eta^2 - \iota^2)^{p/2}}
	\Bigr )^{-1/p} \frac{\beta^2}{\beta^2 \eta^2 - \iota^2} \rar 
	\beta (\beta^2 \eta^2 - \iota^2)^{-1/2}
\end{align}
as $N \rar \infty$. We conclude that for all $N > N_{thresh}$ where $N_{thresh}$ satisfies \eqref{eq:N0def}, we have 
\begin{align}
0 < \Bigl (
\frac{\beta^2 \eta^2 - \iota^2}{\beta^2 \zeta^2} - 1
\Bigr )^{-1} <
	\Bigl (
N^{-p} + \frac{\beta^p}{(\beta^2 \eta^2 - \iota^2)^{p/2}}
\Bigr )^{-1/p} \frac{\beta^2}{\beta^2 \eta^2 - \iota^2} = f_N(1).
\end{align}
Thus, there exists a unique $x_N \in (0,1)$ such that 
\begin{align}
	f_N(x_N) = \Bigl (
	\frac{\beta^2 \eta^2 - \iota^2}{\beta^2 \zeta^2} - 1
	\Bigr )^{-1}
\end{align}
Then $\hat \theta(N) = \cos^{-1} x_N \in (0, \pi/2)$ is the desired angle. 

We define 
\begin{align}
	\theta(\infty) = 	\cos^{-1} \Bigl \{
	\Bigl [
	\Bigl (
	\frac{1}{\zeta^2} - \frac{\beta^2}{\beta^2 \eta^2 - \iota^2} 
	\Bigr )^2 + 
	\frac{1}{\zeta^2} - \frac{\beta^2}{\beta^2 \eta^2 - \iota^2} 
	\Bigr ]^{-1/2} \frac{1}{\zeta} \Bigr \}.
\end{align}
Let $\{ N_k \}$ be a sequence with $N_k \rar \infty$ as $k \rar \infty$, and let 
\begin{align}
	x_{N_k} = \cos \theta(N_k).
\end{align}
 Since $x_{N_k} \in [0,1]$ for all $k$, there exists a subsequence of $\{N_k\}$ denoted by $\{N_j\}$ and $x \in [0,1]$ such that $x_{N_j} \rar x$ as $j \rar \infty$.  Then  
 \begin{align}
 		\Bigl ( \frac{1}{\zeta^2}(1-x^2) + \frac{\beta^2}{\beta^2 \eta^2 - \iota^2} x^2 
 	\Bigr )^{-1/2} \frac{\beta}{\beta^2 \eta^2 - \iota^2} x &= 
 	\lim_{j \rar \infty}
 	f_{N_j}(x_{N_j}) \\ 
 	&= \Bigl (
 	\frac{\beta^2 \eta^2 - \iota^2}{\beta^2 \zeta^2} - 1
 	\Bigr )^{-1}.
 \end{align}
Solving for $x$, we conclude that $x = \cos \hat \theta(\infty)$ and $\cos \theta(N_j) = x_{N_j} \rar \cos \hat \theta(\infty)$. Thus,  
\begin{align}
	\lim_{j \rar \infty} \theta(N_j) = \hat \theta(\infty). 
\end{align}
Since every sequence $\{ N_k \}$ with $N_k \rar \infty$ has a subsequence $\{N_j\}$ with $\theta(N_j) \rar \hat \theta(\infty)$, we conclude \eqref{eq:thetainfinity}. 
\end{proof}

For this branch of nontrivial solutions, $\sin \theta \neq 0$ and \eqref{eq:TTvarphi} imply that $\varphi' = 0$.  Moreover, via \eqref{eq:TTv3def} and \eqref{eq:TTv3} we have the identity 
\begin{gather}
	\Bigl [ 1 + \Bigl (
	\frac{1}{\zeta^2} N^2 \sin^2 \theta + \frac{\beta^2}{\beta^2\eta^2 - \iota^2} N^2 \cos^2 \theta 
	\Bigr )^{p/2} \Bigr ]^{-1/p} \\
	= \frac{\beta^2 \eta^2 - \iota^2}{\beta^2} \frac{1}{N \cos \theta} 
	\Bigl (
	\frac{\beta^2 \eta^2 - \iota^2}{\beta^2 \zeta^2} - 1
	\Bigr )^{-1}. \label{eq:bifidentity}
\end{gather}
Using \eqref{eq:bifidentity} we now summarize the properties of the branch of sheared tensile states. See Figure \ref{f:fig3} for a description of the two tensile branches' configurations compared to the rod's unstressed configuration. 

\begin{figure}[t]
	\centering
	\includegraphics[width=\textwidth]{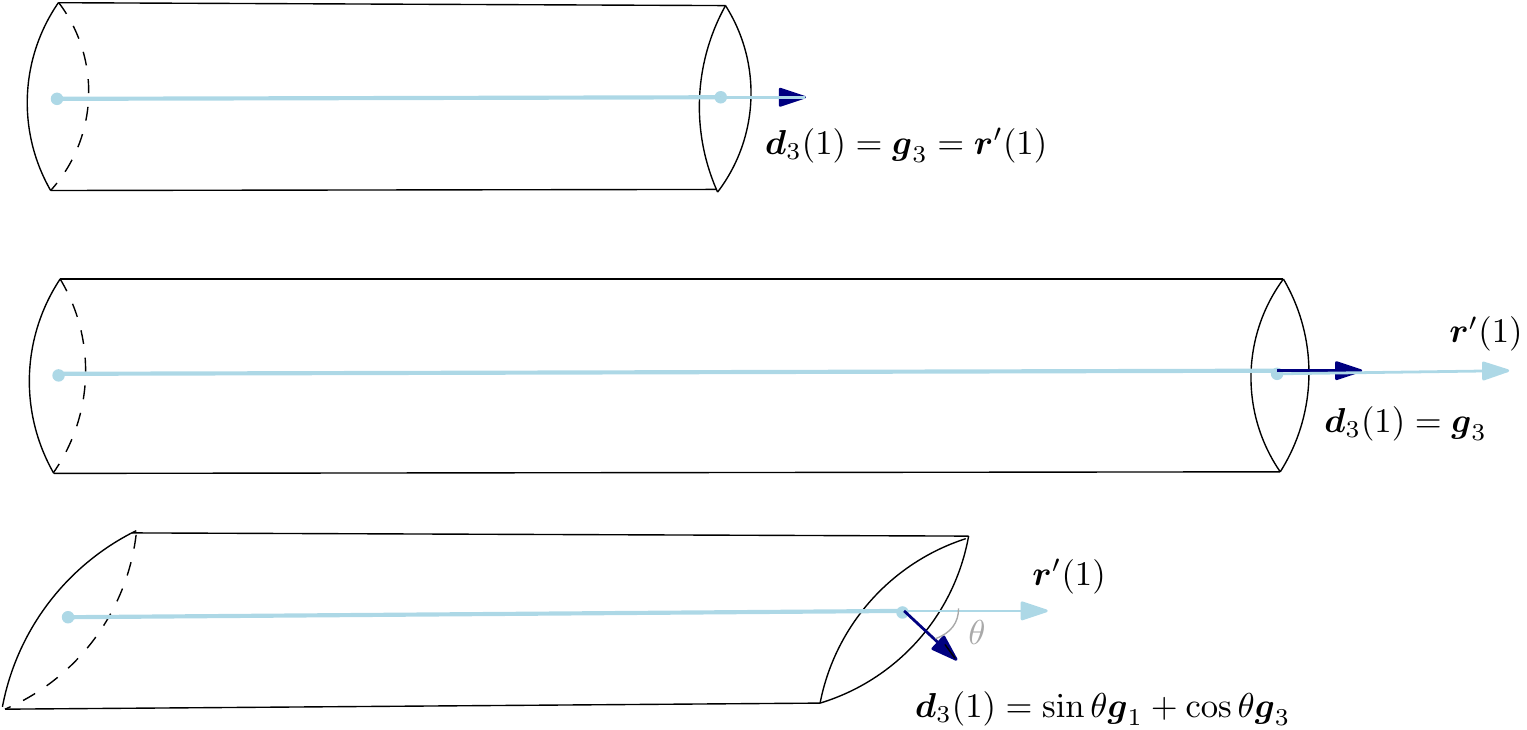}
	\caption{The Carolina blue vector represents $\bs r'(1)$ and the navy vector represents $\bs d_3(1)$ for each configuration. The top configuration represents the unstressed state of the rod. The second and third configurations from the top qualitatively describe the two branches of equilibrium states subject to a large enough isolated, tensile end thrust parallel to $\bs r'(1)$.}
	\label{f:fig3}
\end{figure}

\begin{prop}
Assume, after a proper rotation of the plane spanned by $\{ \bs g_1, \bs g_2 \}$ if necessary, that $\varphi(0) = 0$ (so then $\varphi(s) = 0$ for all $s$). 
Let $N > N_{thresh}$ and $\theta = \hat \theta(N) \in (0,\pi/2)$ be as in Proposition \ref{p:bif}. Then the strains of the associated equilibrium state are given by 
\begin{align}
	u_1 &= u_2 = 0, \\
	u_3 &= \Bigl (
	\frac{\beta^2 \eta^2 - \iota^2}{\beta^2 \zeta^2} - 1
	\Bigr )^{-1} \frac{-\iota}{\beta^2}, \\
	v_1 &= -\Bigl (
	\frac{\beta^2 \eta^2 - \iota^2}{\beta^2 \zeta^2} - 1
	\Bigr )^{-1} \frac{\beta^2 \eta^2 - \iota^2}{\beta^2 \zeta^2} \tan \theta \cos (u_3s + \psi(0)), \\
	v_2 &= \Bigl (
	\frac{\beta^2 \eta^2 - \iota^2}{\beta^2 \zeta^2} - 1
	\Bigr )^{-1} \frac{\beta^2 \eta^2 - \iota^2}{\beta^2 \zeta^2} \tan \theta \sin (u_3 s + \psi(0)), \\
	v_3 - 1 &= \Bigl (
	\frac{\beta^2 \eta^2 - \iota^2}{\beta^2 \zeta^2} - 1
	\Bigr )^{-1}.
\end{align}
The center line of the rod is parallel to $\bs g_3$, and the directors are given by 
	\begin{align}
	\bs d_1(s) &= \cos \theta \cos(u_3s + \psi(0)) \bs g_1 + 
	\sin(u_3 s + \psi(0)) \bs g_2 \\ &\quad - \sin \theta \cos (u_3 s + \psi(0)) \bs g_3, \\ 	
	\bs d_2(s) &= -\cos \theta \sin(u_3s + \psi(0)) \bs g_1 + 
	\cos(u_3 s + \psi(0)) \bs g_2 \\ &\quad + \sin \theta \sin (u_3 s + \psi(0)) \bs g_3, \\
	\bs d_3(s) &= \sin \theta \bs g_1 + \cos \theta \bs g_3.
\end{align}
\end{prop}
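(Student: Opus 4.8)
The plan is to exploit the vanishing of the contact couple together with the bifurcation identity \eqref{eq:bifidentity} to reduce every strain to an explicit algebraic expression, and then to integrate the reduced kinematic equations for the Euler angles. First I would note that since $\bs m = \bs 0$ we have $m_1 = m_2 = m_3 = 0$, so the constitutive relation \eqref{eq:uconst} gives at once $u_1 = u_2 = 0$. Reading the components of $\bs n$ in the director frame off \eqref{eq:nsoln}, namely $n_1 = -N \sin\theta \cos\psi$, $n_2 = N \sin\theta \sin\psi$, and $n_3 = N\cos\theta$, I would compute $n_1^2 + n_2^2 = N^2 \sin^2\theta$ and hence
\[ Q^*(\bs 0, \mfn) = \frac{1}{\zeta^2} N^2 \sin^2\theta + \frac{\beta^2}{\beta^2 \eta^2 - \iota^2} N^2 \cos^2\theta, \]
which depends only on $N$ and the constant angle $\theta = \hat\theta(N)$.

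The crux is the prefactor $\bigl( 1 + Q^*(\bs 0, \mfn)^{p/2} \bigr)^{-1/p}$ common to the constitutive relations. On the sheared branch this factor is governed by the identity \eqref{eq:bifidentity}, which collapses it into the purely algebraic quantity $\frac{\beta^2 \eta^2 - \iota^2}{\beta^2} \frac{1}{N \cos\theta} \bigl( \frac{\beta^2\eta^2 - \iota^2}{\beta^2\zeta^2} - 1 \bigr)^{-1}$, with no remaining $p$-dependence. Substituting this into \eqref{eq:u3const} (with $m_3 = 0$ and $n_3 = N\cos\theta$) cancels the factor $N\cos\theta$ and yields the stated value of $u_3$; substituting into \eqref{eq:vconst} with $n_\mu$ as above produces $v_1$ and $v_2$, the sign and the $\tan\theta$ factor emerging because $n_\mu$ carries a factor $\sin\theta$ that is divided by the $N\cos\theta$ hidden in the prefactor; and $v_3 - 1$ is exactly \eqref{eq:TTv3}.

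Finally I would determine the Euler angles as functions of $s$. On this branch $\sin\theta \neq 0$, so \eqref{eq:TTvarphi} forces $\varphi' = 0$, and the normalization $\varphi(0) = 0$ gives $\varphi \equiv 0$. Then \eqref{eq:psiseq} reduces to $\psi' = u_3$, a constant, so $\psi(s) = u_3 s + \psi(0)$; inserting this into the $v_\mu$ expressions through $\cos\psi$ and $\sin\psi$ finishes them. Plugging $\varphi \equiv 0$ and $\psi(s) = u_3 s + \psi(0)$ into the director formulas \eqref{eq:d3eq}, \eqref{eq:d1eq}, \eqref{eq:d2eq}, and using that $\varphi = 0$ makes $\bs e_2 = \bs g_2$ and $\bs e_1 = \cos\theta \, \bs g_1 - \sin\theta \, \bs g_3$, yields the stated $\bs d_1, \bs d_2, \bs d_3$, while the center line is parallel to $\bs g_3$ by the earlier proposition establishing this for any equilibrium state satisfying \eqref{eq:TTtheta}. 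The only genuine obstacle is the algebraic collapse of the $p$-dependent prefactor via \eqref{eq:bifidentity}; once that is secured the remainder is bookkeeping with the Euler-angle relations.
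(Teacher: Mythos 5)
Your proposal is correct and follows exactly the route the paper intends: the paper states this proposition without a formal proof, but the preceding discussion (deriving $\varphi'=0$ from \eqref{eq:TTvarphi} and the identity \eqref{eq:bifidentity} from \eqref{eq:TTv3def} and \eqref{eq:TTv3}) is precisely the argument you spell out. Your substitution of the collapsed prefactor into \eqref{eq:u3const} and \eqref{eq:vconst}, the integration $\psi(s)=u_3 s+\psi(0)$ from \eqref{eq:psiseq}, and the director computation with $\varphi\equiv 0$ all check out.
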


\subsection{Rod subject to isolated contact couples}
We conclude this section by considering certain equilibrium states under isolated contact couples, $N = 0$. In particular, we will consider a special class of such equilibrium states, those satisfying: for all $s \in [0,1]$
\begin{align}
M_2 = 0 \iff \theta' = 0. 
\end{align}
As in the previous subsection,  for simplicity we will assume that
\begin{align}
	\theta \in [0,\pi/2].
\end{align}
The case $\theta \in (\pi/2,\pi]$ can be analyzed similarly. 
If $\theta' = M_2 = 0$, then by \eqref{eq:M1seq} and \eqref{eq:M3seq}, we conclude that $\varphi$ and $\psi$ satisfy
\begin{align}
	\sin \theta \varphi' &= -u M_1, \label{eq:noN1} \\
	\psi' + \cos \theta \varphi' &= u_3, \\
	(M_1 \cos \theta + M_3 \sin \theta) \varphi' &= 0,
\end{align}
where $\theta, M_1, M_3$ are constant. 

If $M_2 = 0$ and $M_1 = 0$ then by \eqref{eq:M1eq} and \eqref{eq:M2eq} we conclude that $m_\mu = 0$ for $\mu = 1,2$. By \eqref{eq:noN1} we conclude that either: $\theta = 0$, or $\theta \neq 0$ and $\varphi' = 0$. We then have the following.

\begin{prop}
Assume that $M_2 = 0$, $M_1 = 0$ and, after a proper rotation of the plane spanned by $\{\bs g_1, \bs g_2\}$ if necessary, that $\varphi(0) = 0$. 
Then the strains of the associated equilibrium state are constant and given by
\begin{align}
	u_1 &= u_2 = 0, \\
	u_3 &= \Bigl ( 1 + \frac{\eta^p}{(\beta^2 \eta^2 - \iota^2)^{p/2}} |M_3|^p \Bigr )^{-1/p} \frac{\eta^2 M_3}{\beta^2 \eta^2 - \iota^2},
	\label{eq:TTu3M}
	\\
	v_1 &= v_2 = 0, \\
	v_3 - 1 &= \Bigl ( 1 + \frac{\eta^p}{(\beta^2 \eta^2 - \iota^2)^{p/2}} |M_3|^p \Bigr )^{-1/p} \frac{-\iota M_3}{\beta^2 \eta^2 - \iota^2}.
\end{align}
The director $\bs d_3$ is constant, the center line of the rod is parallel to $\bs d_3$, 
and the directors are given by 
		\begin{align}
		\bs d_1(s) &= \cos \theta \cos(u_3s + \psi(0)) \bs g_1 + 
		\sin(u_3 s + \psi(0)) \bs g_2 \\ &\quad - \sin \theta \cos (u_3 s + \psi(0)) \bs g_3, \\ 	
		\bs d_2(s) &= -\cos \theta \sin(u_3s + \psi(0)) \bs g_1 + 
		\cos(u_3 s + \psi(0)) \bs g_2 \\ &\quad + \sin \theta \sin (u_3 s + \psi(0)) \bs g_3, \\
		\bs d_3(s) &= \sin \theta \bs g_1 + \cos \theta \bs g_3.
	\end{align}
\end{prop}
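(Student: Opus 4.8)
The plan is to exploit the vanishing of the contact force together with the two constraints $M_1 = M_2 = 0$ to reduce every strain to an explicit function of the constant twisting couple $M_3$, and then to integrate the kinematic ODEs for the Euler angles. Since $N = 0$ we have $\bs n \equiv \bs 0$, so $n_1 = n_2 = n_3 = 0$; and since $M_1 = M_2 = 0$, the relations \eqref{eq:M1eq} and \eqref{eq:M2eq} give $m_1 = m_2 = 0$, while \eqref{eq:M3eq} gives $m_3 = M_3$, which is constant by \eqref{eq:M3seq}. First I would substitute this data into the complementary-energy constitutive relations. In \eqref{eq:uconst} and \eqref{eq:vconst} the common prefactor multiplies $m_\mu$ and $n_\mu$ respectively, both zero, so immediately $u_1 = u_2 = 0$ and $v_1 = v_2 = 0$.

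Next I would evaluate $Q^*(\mfm, \mfn)$ on this data: only the $m_3^2$ term survives, so $Q^*(\mfm, \mfn) = \tfrac{\eta^2}{\beta^2\eta^2 - \iota^2} M_3^2$ and hence $Q^*(\mfm,\mfn)^{p/2} = \tfrac{\eta^p}{(\beta^2\eta^2-\iota^2)^{p/2}}|M_3|^p$. Feeding this (with $\gamma = 1$) into \eqref{eq:u3const} and \eqref{eq:v3const}, and using $n_3 = 0$, yields exactly the stated closed forms for $u_3$ and $v_3 - 1$; all four strains are therefore constant. For the kinematics, I would record that $M_2 = 0$ forces $\theta' = 0$, so $\theta$ is constant, and that \eqref{eq:noN1} with $M_1 = 0$ gives $\sin\theta\,\varphi' = 0$; thus either $\theta = 0$ or $\theta \neq 0$ with $\varphi' = 0$. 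In either case I may take $\varphi \equiv 0$ after the allowed rotation normalizing $\varphi(0) = 0$. Then \eqref{eq:psiseq} collapses to $\psi' = u_3$, which integrates to $\psi(s) = u_3 s + \psi(0)$ since $u_3$ is constant. With $\varphi = 0$ the auxiliary frame becomes $\bs e_1 = \cos\theta\,\bs g_1 - \sin\theta\,\bs g_3$, $\bs e_2 = \bs g_2$, $\bs e_3 = \bs d_3 = \sin\theta\,\bs g_1 + \cos\theta\,\bs g_3$; substituting this $\psi$ into \eqref{eq:d1eq} and \eqref{eq:d2eq} produces the claimed expressions for $\bs d_1, \bs d_2$, and \eqref{eq:d3eq} gives $\bs d_3$. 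Finally $\bs r' = v_k \bs d_k = v_3 \bs d_3$ because $v_1 = v_2 = 0$, so since $\bs d_3$ is constant and $v_3 > 0$ the center line is parallel to $\bs d_3$.

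This argument is essentially bookkeeping, so there is no substantive obstacle; the only point requiring a little care is the degenerate case $\theta = 0$, where $\varphi$ is not pinned down by $\sin\theta\,\varphi' = 0$. I would handle this by observing that when $\theta = 0$ the angle $\varphi$ enters \eqref{eq:d1eq}--\eqref{eq:d3eq} only through the combination $\sin\theta(\cos\varphi\,\bs g_1 + \sin\varphi\,\bs g_2)$, which vanishes, so setting $\varphi \equiv 0$ is harmless and the stated formulas remain valid, reducing to $\bs d_3 = \bs g_3$.
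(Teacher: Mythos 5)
Your proposal is correct and follows essentially the same route as the paper, which derives the proposition directly from the preceding reduction ($m_\mu = 0$ from $M_1 = M_2 = 0$, the dichotomy $\theta = 0$ or $\varphi' = 0$, and substitution of $n_k = 0$, $m_3 = M_3$ into the constitutive relations and Euler-angle kinematics). Your explicit treatment of the degenerate case $\theta = 0$ is a sensible addition that the paper leaves implicit.
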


We comment that the length of the rod under the isolated couple $\bs m(s) = M_3 \bs d_3$ is given by 
\begin{align}
	\int_0^1 |\bs r'(s)| ds - 1 = v_3 - 1 =  \Bigl ( 1 + \frac{\eta^p}{(\beta^2 \eta^2 - \iota^2)^{p/2}} |M_3|^p \Bigr )^{-1/p} \frac{-\iota M_3}{\beta^2 \eta^2 - \iota^2}. 
\end{align}
Thus, if $\iota \neq 0$, then the rod changes length due to a isolated couple. In particular, if the chirality of the rod is opposite to the chirality of the couple, $-\iota M_3 > 0$, then we observe a \emph{Poynting} effect: the application of an isolated couple elongates the rod.  As $M_3 \rar \pm \infty$ we obtain the nonzero limiting strains: 
\begin{align}
	\lim_{M_3 \rar \pm \infty} u_3 &= \pm \eta (\beta^2 \eta^2 - \iota^2)^{-1/2}, \\
	\lim_{M_3 \rar \pm \infty} (v_3 - 1) &= \mp \frac{\iota}{\eta} (\beta^2 \eta^2 - \iota^2)^{-1/2}. 
\end{align}

We conclude this study by considering the case $M_1 \neq 0$. We note that by \eqref{eq:phiseq}, there exists $s \in [0,1]$ such that $\sin \theta \varphi'(s) = 0$ if and only if $M_1 = 0$.  

\begin{prop}
	Assume that $M_2 = 0$, $M_1 \neq 0$,  $M_3 = -M_1 \cot \theta$ with $\theta \in (0,\pi/2]$, and, after a proper rotation of the plane spanned by $\{\bs g_1, \bs g_2\}$ if necessary, that $\varphi(0) = 0$. Define 
	\begin{align}
			\varphi(s) &= 
		- \Bigl [
		1 + |M_1|^p \Bigl (
		\frac{1}{\al^2} + \frac{\eta^2}{\beta^2 \eta^2 - \iota^2} \cot^2 \theta 
		\Bigr )^{p/2}
		\Bigr ]^{-1/p} M_1 (\csc \theta) s, \\
		\psi(s) &= -\Bigl (
		1 - \frac{\beta^2 \eta^2 - \iota^2}{\al^2 \eta^2}
		\Bigr ) \Bigl [
		1 + |M_1|^p \Bigl (
		\frac{1}{\al^2} + \frac{\eta^2}{\beta^2 \eta^2 - \iota^2} \cot^2 \theta 
		\Bigr )^{p/2}
		\Bigr ]^{-1/p}\\
		&\quad \times \frac{\eta^2}{\beta^2 \eta^2 - \iota^2} M_1 (\cot \theta) s + \psi(0),
\end{align}
	Then the strains of the associated equilibrium state are given by
\begin{align}
	u_1 &= \Bigl [
	1 + |M_1|^p \Bigl (
	\frac{1}{\al^2} + \frac{\eta^2}{\beta^2 \eta^2 - \iota^2} \cot^2 \theta 
	\Bigr )^{p/2}
	\Bigr ]^{-1/p} \frac{1}{\al^2} M_1 \cos \psi(s), \\
	u_2 &= -\Bigl [
	1 + |M_1|^p \Bigl (
	\frac{1}{\al^2} + \frac{\eta^2}{\beta^2 \eta^2 - \iota^2} \cot^2 \theta 
	\Bigr )^{p/2}
	\Bigr ]^{-1/p} \frac{1}{\al^2} M_1 \sin \psi(s), \\
	u_3 &= - \Bigl [
	1 + |M_1|^p \Bigl (
	\frac{1}{\al^2} + \frac{\eta^2}{\beta^2 \eta^2 - \iota^2} \cot^2 \theta 
	\Bigr )^{p/2}
	\Bigr ]^{-1/p} \frac{\eta^2}{\beta^2 \eta^2 - \iota^2} M_1 \cot \theta, \\
	v_1 &= v_2 = 0, \\
	v_3 - 1 &= 
	\Bigl [
	1 + |M_1|^p \Bigl (
	\frac{1}{\al^2} + \frac{\eta^2}{\beta^2 \eta^2 - \iota^2} \cot^2 \theta 
	\Bigr )^{p/2}
	\Bigr ]^{-1/p} \frac{\iota}{\beta^2 \eta^2 - \iota^2} M_1 \cot \theta,
\end{align} 
The center line is tangent to the director $\bs d_3$, 
\begin{align}
\bs r'(s) = v_3 \bs d_3(s) = v_3\sin \theta \cos \varphi(s) \bs g_1 + v_3 \sin \theta \cos \varphi(s) \bs g_2 + v_3 \cos \theta \bs g_3, \label{eq:rprimehelix}
\end{align}
\end{prop}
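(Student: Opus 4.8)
The plan is a direct verification within the semi-inverse framework, the key observation being that the standing hypotheses force the modulating factor $(1 + Q^*(\mfm,\mfn)^{p/2})^{-1/p}$ to be constant in $s$. Since $N = 0$ we have $\mfn = \bs 0$, and by assumption $\theta$, $M_1 \neq 0$, and $M_3 = -M_1\cot\theta$ are constant while $M_2 = 0$. The first step is to evaluate $Q^*(\mfm,\bs 0)$. The components $m_k$ in the frame $\{\bs d_k\}$ and $M_k$ in the frame $\{\bs e_k\}$ are related by \eqref{eq:M1eq}--\eqref{eq:M3eq}, a rotation about $\bs e_3 = \bs d_3$; hence $m_1^2 + m_2^2 = M_1^2 + M_2^2 = M_1^2$ and $m_3 = M_3 = -M_1\cot\theta$. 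Inserting these and $\mfn = \bs 0$ into the definition of $Q^*$ gives
\begin{align}
Q^*(\mfm,\bs 0) = M_1^2\Bigl(\frac{1}{\al^2} + \frac{\eta^2}{\beta^2\eta^2 - \iota^2}\cot^2\theta\Bigr),
\end{align}
which is independent of $s$, so the modulating factor equals the constant bracketed quantity appearing throughout the statement.

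With this factor in hand, the strains are read off directly from \eqref{eq:uconst}--\eqref{eq:v3const}. Because $\mfn = \bs 0$, the relation \eqref{eq:vconst} gives $v_1 = v_2 = 0$, while \eqref{eq:u3const} and \eqref{eq:v3const} collapse to expressions in $m_3 = -M_1\cot\theta$ alone, yielding the claimed $u_3$ and $v_3 - 1$. Inverting \eqref{eq:M1eq}, \eqref{eq:M2eq} with $M_2 = 0$ gives $m_1 = M_1\cos\psi$ and $m_2 = -M_1\sin\psi$, so \eqref{eq:uconst} produces $u_1, u_2$ as the stated functions of $\psi(s)$. It then remains to determine $\varphi$ and $\psi$. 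As the modulating factor is constant, \eqref{eq:noN1} has constant right-hand side; integrating with $\varphi(0) = 0$ makes $\varphi$ linear in $s$. The third reduced equation $(M_1\cos\theta + M_3\sin\theta)\varphi' = 0$ imposes no further condition, since $M_3 = -M_1\cot\theta$ makes its coefficient vanish identically. Finally \eqref{eq:psiseq} reads $\psi' = u_3 - \cos\theta\,\varphi'$, again constant; integrating and simplifying the slope using the explicit $u_3$ and $\varphi'$ recovers the displayed $\psi(s)$.

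The directors then follow by substituting the Euler angles into \eqref{eq:d3eq}, \eqref{eq:d1eq}, \eqref{eq:d2eq}, and the center-line tangent from $\bs r' = v_k\bs d_k$: since $v_1 = v_2 = 0$ this is simply $\bs r' = v_3\bs d_3$, and expanding $\bs d_3$ via \eqref{eq:d3eq} exhibits the helical center line (with $\varphi(s)$ now varying linearly, so $\bs d_3$ precesses about $\bs g_3$ at the fixed polar angle $\theta$). The main point requiring care is the apparent circularity of the ansatz: the formulas for $u_1, u_2$ contain the unknown $\psi$, whereas $\psi$ is found by integrating an equation whose slope involves $u_3$. I would dispel this by recalling that \eqref{eq:phiseq}--\eqref{eq:psiseq} are exactly the kinematic identities \eqref{eq:u1angle}--\eqref{eq:u3angle} rewritten via $u_\mu = u\,m_\mu$ and the rotation \eqref{eq:M1eq}, \eqref{eq:M2eq}; hence any solution of the reduced first-order system automatically yields a genuine equilibrium state satisfying the constitutive relations, and the dependence closes consistently. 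A final sanity check is that $\theta \in (0,\pi/2]$ keeps $\sin\theta \neq 0$, so that \eqref{eq:noN1} genuinely determines $\varphi'$ and $\cot\theta, \csc\theta$ remain finite.
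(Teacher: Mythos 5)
Your proof is correct and is essentially the paper's own (implicit) argument: the authors give no separate proof for this proposition, and the intended justification is exactly your direct verification — $Q^*(\mfm,\bs 0)$ is constant because $m_1^2+m_2^2=M_1^2$ and $m_3=-M_1\cot\theta$, the strains are read off from \eqref{eq:uconst}--\eqref{eq:v3const}, the condition $M_3=-M_1\cot\theta$ kills the remaining moment-balance equation, and $\varphi,\psi$ come from integrating \eqref{eq:noN1} and \eqref{eq:psiseq}. One remark: carrying out your integration literally gives $\varphi'=-\bigl[1+|M_1|^p(\tfrac{1}{\al^2}+\tfrac{\eta^2}{\beta^2\eta^2-\iota^2}\cot^2\theta)^{p/2}\bigr]^{-1/p}\al^{-2}M_1\csc\theta$, which differs from the proposition's displayed $\varphi(s)$ by the factor $\al^{-2}$; this is a typo in the statement rather than an error in your argument, since both the displayed $\psi(s)$ and the subsequent $\theta=\pi/2$ circle formula are consistent only with the $\al^{-2}$ present.
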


As discussed by Antman within a more general setting \cite{Antman74}, we note that \eqref{eq:rprimehelix} implies that the center line $\bs r(s)$  is a right-handed helix of radius $a = \frac{v_3 \sin \theta}{\varphi'}$ and pitch $b = v_3 \cos \theta$.  

In the case $\theta = \pi/2$, $M_3 = 0$, $\bs r$ is a circle in the plane 
spanned by $\{\bs g_1, \bs g_2\}$, 
\begin{align}
	\bs r(s) &= \Bigl ( 1 + |M_1|^p \al^{-p} \Bigr )^{1/p} \al^2 M_1^{-1} \sin \Bigl 
	(
	\Bigl ( 1 + |M_1|^p \al^{-p} \Bigr )^{-1/p} \al^{-2} M_1 s
	\Bigr  )
	 \bs g_1 \\
	 &\quad + \Bigl ( 1 + |M_1|^p \al^{-p} \Bigr )^{1/p} \al^2 M_1^{-1}
	 \cos 	\Bigl (
	 \Bigl ( 1 + |M_1|^p \al^{-p} \Bigr )^{-1/p} \al^{-2} M_1 s
	 \Bigr  ) \bs g_2,
\end{align}
and the rod is in a state of pure bending,
\begin{align}
	u_1 &= \Bigl ( 1 + |M_1|^p \al^{-p} \Bigr )^{-1/p} \frac{1}{\al^2} M_1 \cos \psi(0), \\
	u_2 &= -\Bigl ( 1 + |M_1|^p \al^{-p} \Bigr )^{-1/p} \frac{1}{\al^2} M_1 \sin \psi(0), \\
	u_3 &= 0, \\
	v_1 &= v_2 = v_3 - 1 = 0. 
\end{align}
As $M_1 \rar \pm \infty$ we obtain the limiting strains and curvature
\begin{align}
	\lim_{M_1 \rar \pm \infty} u_1 &= \pm \frac{1}{\al} \cos \psi(0), \\
	\lim_{M_1 \rar \pm \infty} u_2 &= \mp \frac{1}{\al} \sin \psi(0), \\
	\lim_{M_1 \rar \pm \infty} (u_1^2 + u_2^2)^{1/2} &= \frac{1}{\al}.  
\end{align}

\section{Conclusion}

In this work we introduced an intrinsic set of strain-limiting constitutive relations, between the geometrically exact strains and components of the contact couple and force, for special Cosserat rods. We then showed these relations are derivable from a complementary energy, are orientation preserving and satisfy a mathematically attractive monotonicity property. Finally, we computed some explicit equilibrium states displaying rich behavior including Poynting effects and tensile shearing bifurcations. We now discuss potential mathematical and modeling perspectives for future work.   

\subsection{Mathematical perspectives}

In Section 4 we computed equilibrium states from the semi-inverse standpoint and not by considering a fixed boundary value problem. 
Upon fixing boundary conditions at the two ends, the following problems are then suggested:
\begin{itemize}
	\item multiplicity of equilibrium states,
	\item stability of these equilibrium states. 
\end{itemize}   

Needless to say, both of these problems are highly dependent on the choice of boundary conditions. Even in the case of a straight center line, the equilibrium state corresponding to the shearing bifurcation obtained in Proposition \ref{p:bif} is not present if one imposes a fixed position and orientation at the end $s = 0$, 
\begin{gather}
	\bs r(0) = \bs o, \quad \bs d_k(0) = \bs g_k, \quad k = 1, 2, 3, \\
	\bs m(1) = \bs 0, \quad \bs n(1) = N \bs g_3. 
\end{gather}
However, if instead one imposes that the resultant contact couple about the origin $\bs o$ is zero,
\begin{gather}
	\bs r(0) = \bs o, \quad \bs m(1) + (\bs r(1) - \bs o) \times \bs n(1) - \bs m(0) = \bs 0, \quad \bs n(1) = N \bs g_3,	
\end{gather} 
then both branches of equilibrium states are present. 

In our opinion, stability of these equilibrium states is a dynamic question and cannot even be formulated unless the time-dependent field equations and nature of dissipation are specified.  For example, in the isothermal setting and interpreting the directors $\{ \bs d_1, \bs d_2 \}$ as specifying the material cross sections' principal axes of inertia, the dynamic field equations are given by 
\begin{align}
	(\rho A) \p_t^2 \bs r &= \p_s \bs n, \label{eq:linmom} \\
	\p_t \bigl [(\rho \bs J) \bs w \bigr ] &= \p_s \bs m + \p_s \bs r \times \bs n, \label{eq:angmom}
\end{align}
where $(\rho A)(s)$ is the mass density per unit reference length and $(\rho \bs J)(s,t)(\bs w(s,t))$ is the angular momentum of the of the cross section relative to $\bs r$ (calculable from given quantities). See Chapter 8 of \cite{AntmanBook}. 

However, there are infinitely many choices of dissipative mechanisms, each corresponding to a specification of a nonnegative total dissipation rate.  Two distinct natural choices which reduce to the constitutive relations introduced in this paper in the static setting are 
\begin{gather}
	\msu + \al \p_t \msu = \frac{\p W^*}{\p \msm}(\msm, \msn), \quad 
	\msv + \al \p_t \msv = \frac{\p W^*}{\p \msn}(\msm, \msn), \label{eq:implicitdiss}
\end{gather}
and
\begin{gather}
		\msm = \frac{\p W}{\p \msu}(\msu, \msv) + \mu \p_t \msu, \quad \msn = \frac{\p W}{\p \msv}(\msu, \msv) + \nu \p_t \msv, \label{eq:kelvinvoight}
\end{gather}
where $\al, \beta, \mu, \nu \geq 0$, $W^*$ is given by \eqref{eq:compenergy} and $W$ is given by \eqref{eq:stenergy}. The relations in \eqref{eq:implicitdiss} are strain-rate viscoelastic constitutive relations analogous to the small-strain strain-rate constitutive relations studied in \cite{RajSacc2014, BULPATSULISENGUL21, BULPATSULISENGUL22}. The relations \eqref{eq:kelvinvoight} are standard Kelvin-Voight type constitutive relations. In the authors' opinion, developing the stability theory of the equilibrium states discussed in Section 4, for a fixed boundary value problem and either choice of dissipation \eqref{eq:implicitdiss} or \eqref{eq:kelvinvoight}, is an interesting and worthwhile endeavor.  

\subsection{Modeling perspectives}

As mentioned in the introduction, the strain limiting constitutive relations introduced in this work would be ideal for modeling many rod-like materials whose final tangent stiffness greatly exceeds the initial tangent stiffness.  Such materials include collagen, elastin, silk, protein, DNA, RNA and many others.\footnote{Most materials such as DNA are viscoelastic, but as a first approximation can be considered as elastic bodies.}  De Gennes \cite{DeGennes1974} in describing such materials states ``the elongation tends to saturate: The restoring force F which tries to make a compact chain becomes infinite if it gets completely extended". That is, to get a compact chain to its full finite length, one needs infinite force, which we interpret in our idealization as limiting extensibility. As Freed and Rajagopal \cite{FreedRaj2016} observe, the idea that biological fibers can be strain-limiting was first propounded by Carton et al. \cite{CartonDainClark1962}. Later, Hunter \cite{Hunter1995} and Maksym and Bates \cite{MaksymBates97} also advocated the same notion.  

In particular, since DNA molecules are constantly bending, twisting and stretching inside cells during multiple biological processes, it is important to develop a simple model capturing their essential mechanical response. The small-strain special Cosserat rod model for double-stranded DNA (dsDNA) appearing in \cite{Lipfertetal14} suggests that for isolated tensile forces up to approximately 50 pN, our model with 
	\begin{gather}
		\gamma_{D}\al_D^2 = 185 \mbox{ pN (nm)}^2, \quad \gamma_D \beta_D^2 = 447 \mbox{ pN (nm)}^2, \\ 
		\gamma_D \eta_D^2 = 1000 \mbox{ pN}, \quad \gamma_D \iota_D = -70 \mbox{ pN  nm}, \label{eq:DNA}
	\end{gather} 
accurately describes the response of dsDNA (the rod is taken to be unshearable, so $v_1 = v_2 = 0$ always). However, as discovered by experiments on individual dsDNA molecules using optical tweezers \cite{Smith96}, for isolated tensile forces above approximately 65 pN, a dsDNA molecule undergoes a force-dependent, rate-dependent \emph{overstretching transition} wherein the molecule elongates to approximately 1.8 times its contour length and its response asymptotically approaches that of a single-stranded DNA molecule (ssDNA). 

\begin{figure}[b]
	\centering
	\includegraphics[scale=.5]{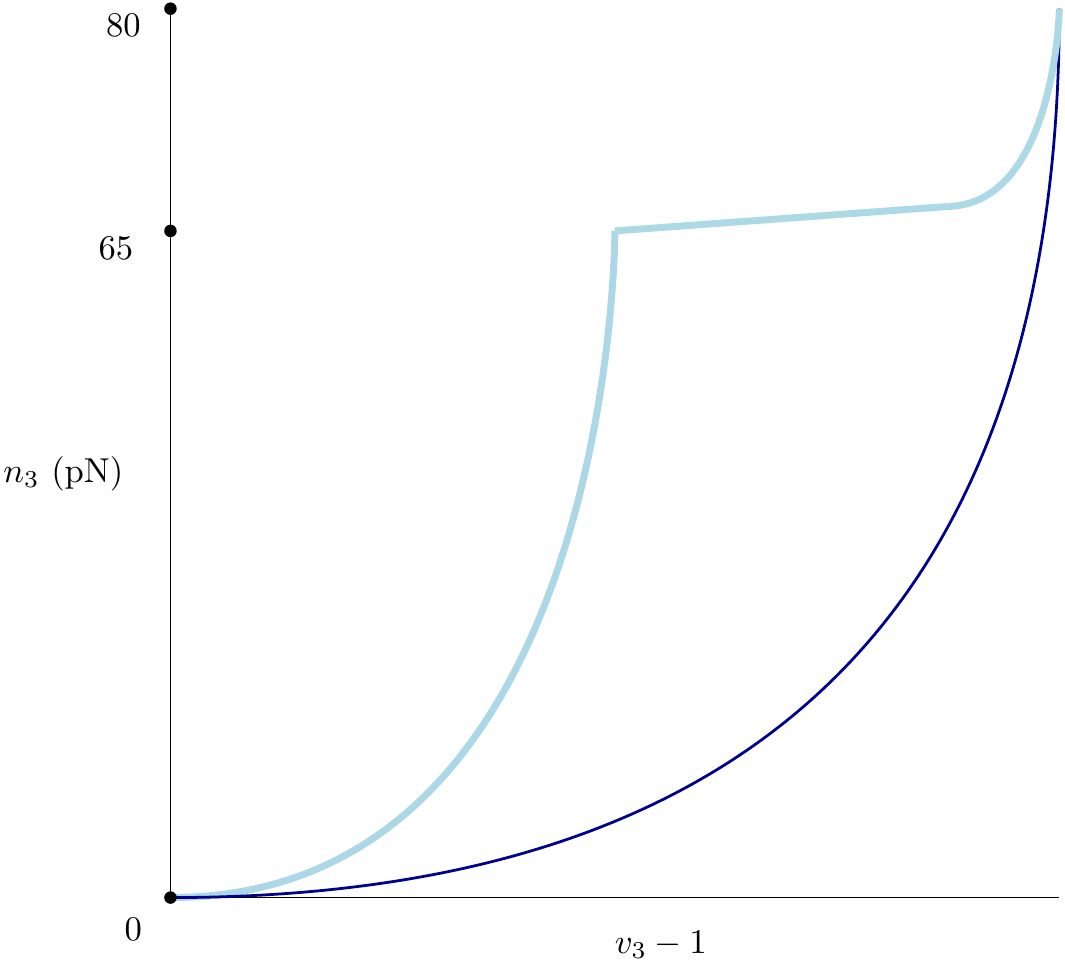}
	\caption{Schematic of the overstretching transition of dsDNA with the Carolina blue curve representing the force-stretch curve for dsDNA and the navy curve representing the force-stretch curve for ssDNA.}
	\label{f:fig4}
\end{figure}

After much debate over several years, its been revealed via experiments using both optical tweezers and fluorescent microscopy that there are three molecular mechanisms involved in the overstretching transition:
\begin{itemize}
	\item the peeling of one strand away from the other, 
	\item base-pair bonds melting, 
	\item base-pair bonds remaining intact and cooperative strand unwinding, converting parts of the molecule into ladder like structures (S-form DNA). 
\end{itemize} 
We refer the reader to the reviews \cite{ZaltronetalReview, BustamantePrimer} for more on the literature, experimental techniques and results leading to these conclusions. 

The precise superposition of strand splitting, bond melting and conversion into S-form DNA can be quite complex and difficult to track at the molecular level during the overstretching transition. However, using the special Cosserat rod model introduced in this work, the overstretching transition can be modeled by a rod with material constants $\al, \beta, \gamma, \iota, \eta$ and $p$ initially given by \eqref{eq:DNA} and converging to those of ssDNA, as an isolated tensile end thrust is applied at higher and higher forces. This process is rate-dependent and hysteretic \cite{Smith96}. The thermodynamic framework introduced by the first author and Srinivasa for  evolving natural configurations of three-dimensional bodies \cite{RAJAGOPALSRI2000, RajSriZAMP04a, RajSriZAMP04b} provides a road map for developing a thermodynamically consistent model capable of describing this phase transition at the level of continuum rods, a topic to be discussed in future work. 

\bibliographystyle{plain}
\bibliography{researchbibmech}
\bigskip

\centerline{\scshape K. R. Rajagopal}
\smallskip
{\footnotesize
	\centerline{Department of Mechanical Engineering, Texas A\&M University}
	
	\centerline{College Station, TX 77843, USA}
	
	\centerline{\email{krajagopal@tamu.edu}}
}

\bigskip

\centerline{\scshape C. Rodriguez}
\smallskip
{\footnotesize
	\centerline{Department of Mathematics, University of North Carolina}
	
	\centerline{Chapel Hill, NC 27599, USA}
	
	\centerline{\email{crodrig@email.unc.edu}}
}

\end{document}